\newcommand{\Tau}{\mathcal{T}}
\Crefname{algorithm}{Algorithm}{Algorithms}
\Crefname{section}{Section}{Sections}
\Crefname{lemma}{Lemma}{Lemmata}
\Crefname{figure}{Figure}{Figures}
\Crefname{property}{Property}{Properties}
\Crefname{theorem}{Theorem}{Theorems}
\newcommand{\ssquare}{{\scalebox{0.5}{$\square$}}}
\definecolor{mygreen}{rgb}{0, 0.5, 0}
\begin{document}
\title{Ranking and Unranking of the Planar Embeddings of a Planar Graph}
%
%
\author{Giuseppe Di Battista\inst{1} \and
Fabrizio Grosso\inst{1} \and\\
Giulia Maragno\inst{1} \and Maurizio Patrignani\inst{1}}
\authorrunning{G. Di Battista et al.}
%
\institute{University of Roma Tre \email{\{giuseppe.dibattista, fabrizio.grosso, maurizio.patrignani\}@uniroma3.it, giu.maragno@stud.uniroma3.it}}
\maketitle              
\begin{abstract}
Let $\mathcal{G}$ be the set of all the planar embeddings of a (not necessarily connected) $n$-vertex graph $G$. 
We present a bijection $\Phi$ from $\mathcal{G}$ to the natural numbers in the interval $[0 \dots |\mathcal{G}| - 1]$. 
Given a planar embedding $\mathcal{E}$ of~$G$, we show that $\Phi(\mathcal{E})$ can be decomposed into a sequence of $O(n)$ natural numbers each describing a specific feature of~$\mathcal{E}$. The function $\Phi$, which is a ranking function for $\mathcal{G}$, can be computed in $O(n)$ time, while its inverse unranking function $\Phi^{-1}$ can be computed in $O(n \alpha(n))$ time. The results of this paper can be of practical use to uniformly at random generating the planar embeddings of a graph~$G$ or to enumerating such embeddings with amortized constant delay. Also, they can be used to counting, enumerating or uniformly at random generating constrained planar embeddings of~$G$.

\keywords{Planarity \and Graph Drawing \and Ranking \and Pr{\"u}fer sequences.}
\end{abstract}
\section{Introduction}
The space of the planar embeddings of planar graphs is one of the favorite playgrounds of Graph Drawing and of Graph Algorithms. For instance, several algorithms (e.g., \cite{adp-fmdepg-11,bdd-codmn-00,dlop-oodlt-20,gmw-iepg-01,mw-oacep-99,mw-coepg-00}) delve into this space to find graph drawings that are optimal according to some metrics, there are problems that are polynomially solvable (e.g., \cite{DBLP:journals/algorithmica/BertolazziBLM94,dlp-hvpac-19}) if a planar embedding is given and that become NP-complete when the choice of any planar embedding is allowed (e.g., \cite{dlp-hvpac-19,gt-ccurp-01}), and there are many algorithms that look for planar drawings whose underlying planar embedding must satisfy some constraints (e.g., \cite{adfjkpr-tppeg-j14,pt-mdge-00,DBLP:journals/constraints/Tamassia98}).

This space has been studied from various perspectives. As an example, all the planar embeddings of a biconnected planar graph are implicitly represented by a data structure called SPQR-tree \cite{DBLP:journals/siamcomp/BattistaT96,DBLP:journals/algorithmica/BattistaT96}, and ranking and unranking functions where given for such a graph relying on PQ-trees or segment graphs \cite{KARABEG1993249,Vo-Dick-Williamson-85}.
Also, the results of Cai~\cite{cai1993counting} (see also \cite{DBLP:journals/dm/Stallmann93}), which are extremely important in the perspective of our paper, measure the ``extension'' of such a space, providing the exact number of planar embeddings of a given planar graph.

Let $G$ be a (not-necessarily connected) planar graph. In this paper we construct a bijective function $\Phi$ that maps a planar embedding of $G$ to a natural number (\emph{ranking} \cite{DBLP:books/KreherS99}) and such that its inverse maps a natural number to a planar embedding of $G$ (\emph{unranking} \cite{DBLP:books/KreherS99}). Both $\Phi$ and its inverse are efficiently computable.
Also, given a planar embedding $\mathcal{E}$ of~$G$, we show that $\Phi(\mathcal{E})$ can be decomposed into a sequence of $O(n)$ natural numbers each describing a specific feature of~$\mathcal{E}$. 
These results can be of practical use to uniformly at random generate the planar embeddings of~$G$ or to enumerate such embeddings with amortized constant delay. Also, they can be used for counting, enumerating or uniformly at random generating constrained planar embeddings of~$G$.

Formally, the  paper is devoted to prove \cref{th:general}. In fact, \cref{th:general} and the simple \cref{le:tuple-number} that is shown in \cref{se:preliminaries} imply the existence of a ranking and an unranking function that can be efficiently computed.

Let $G$ be a graph and denote by $n(G)$, $m(G)$, and $f(G)$, the number of vertices, edges, and faces of any planar embedding of $G$, respectively. Also, for a cut-vertex $v_\xi$ of $G$ let $\delta_{v_\xi}$ be the degree of $v_\xi$ in $G$, let $b(v_\xi)$ be the number of biconnected components containing $v_\xi$, and let $\delta_{v_\xi,\sigma}$ be the degree of $v_\xi$ restricted the biconnected component $B_\sigma$. Finally, for a parallel triconnected component $P_\xi$ of a biconnected component of $G$ let $\delta(P_\xi)$ be the number of its branches.

\begin{theorem}\label{th:general}
Let $G$ be an $n$-vertex planar graph with connected components $G_1, \dots, G_t$. 
Let $v_1, \dots, v_w$ be the cut-vertices of $G$. 
%
%
Let $P_1, \dots, P_y$ be the parallel triconnected components of the SPQR-trees of all the biconnected components of $G$. 
Let $R_1, \dots, R_z$ be the rigid triconnected components of the SPQR-trees of all the biconnected components of $G$. 
There exists a bijective function $\Phi$ whose domain is the set of the planar embeddings of $G$ on the sphere and whose codomain is the set of tuples of natural numbers 

\begin{center}
\begin{tabular}{ c }
$\langle a_1, \dots, a_{t-1}, b_1, \dots, b_t, c_{1,1}, \dots, c_{1,b(v_1)}, \dots, c_{w,1}, \dots, c_{w, b(v_w)},$ \\
$d_{1,1}, \dots, d_{1,b(v_1)-2}, \dots, d_{w,1}, \dots, d_{w,b(v_w)-2}, p_1, \dots, p_y, r_1, \dots, r_z \rangle$   
\end{tabular}
\end{center}
where
\begin{itemize}
\item $a_\xi \in [0 \dots \sum_{h=1}^{t}{(f(G_h)-1)]}$, for $\xi=1, \dots, t-1$;
\item $b_\xi \in [0 \dots f(G_h)-1]$, for $\xi=1, \dots, t$;
\item $c_{\xi,\sigma} \in [0 \dots \delta_{v_\xi,\sigma}-1]$, for $\xi = 1, \dots, w$ and $\sigma = 1, \dots, b(v_\xi)$;
\item $d_{\xi, \sigma} \in [0 \dots \delta_{v_\xi}-\sigma-1]$, for $\xi = 1, \dots, w$ and $\sigma = 1, \dots, b(v_\xi)-2$;
\item $p_\xi \in [0 \dots (\delta(P_\xi)-1)! - 1]$, for $\xi=1, \dots, y$;
\item $r_\xi \in [0, 1]$, for $\xi=1, \dots, z$.
\end{itemize}
%
The number of elements of the tuples are $O(n)$.
The function $\Phi$ can be computed in $O(n)$ time, while $\Phi^{-1}$ can be computed in $O(n \alpha(n))$ time, where $\alpha$ is the inverse Ackermann function.
\end{theorem}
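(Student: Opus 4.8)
The plan is to build $\Phi$ as the composition of a hierarchical decomposition of a sphere embedding into mutually independent families of ``choices'' -- one family per level of connectivity -- and then to linearize the resulting tuple into a single rank via \cref{le:tuple-number}. I would organize the argument around three classical facts: a sphere embedding of $G$ is determined by (i) a combinatorial embedding of each connected component together with the mutual nesting of the components, (ii) within each component, a combinatorial embedding of each block together with the rotation of the blocks around every cut-vertex, and (iii) within each block, a cyclic order of the branches at every $\mathsf P$-node and a flip at every $\mathsf R$-node of its SPQR-tree ($\mathsf S$-nodes being rigidly determined). The heart of the proof is then to show that each of these three layers of freedom is counted, bijectively and independently, by a designated sub-tuple: the $a_\xi,b_\xi$ for the nesting of components, the $c_{\xi,\sigma},d_{\xi,\sigma}$ for the cut-vertices, and the $p_\xi,r_\xi$ for the $\mathsf P$- and $\mathsf R$-nodes.

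First I would treat the innermost layer, which is the cleanest. Fixing an SPQR-tree for every block, I would recall that the distinct combinatorial embeddings of a biconnected graph correspond exactly to independent choices of a cyclic permutation of the $\delta(P_\xi)$ branches at each $\mathsf P$-node $P_\xi$ and of one of two reflections at each $\mathsf R$-node $R_\xi$. Indexing the $(\delta(P_\xi)-1)!$ cyclic permutations by $p_\xi$ and the two reflections by $r_\xi\in\{0,1\}$ yields the announced ranges directly; the only care needed is to fix once and for all a reference branch and a reference orientation at each node, so that the indexing is canonical and does not depend on hidden choices.

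Next I would handle the two \emph{tree-like} layers, where Pr\"ufer-type encodings enter and where I expect the main difficulty to lie. Around a cut-vertex $v_\xi$ lying in $b(v_\xi)$ blocks, a planar rotation is obtained by first selecting, inside each block $B_\sigma$, the edge of $v_\xi$ made adjacent to the rest (the $\delta_{v_\xi,\sigma}$ choices encoded by $c_{\xi,\sigma}$) and then assembling the blocks into a single rotation at $v_\xi$; the assembly is a non-crossing merge whose freedom I would encode by a Pr\"ufer-like string $d_{\xi,1},\dots,d_{\xi,b(v_\xi)-2}$ with the stated decreasing ranges, exactly as a weighted Cayley/Pr\"ufer bijection produces the factors $(\delta_{v_\xi}-1)(\delta_{v_\xi}-2)\cdots$. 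The nesting of the $t$ connected components on the sphere is analogous but global: I would reduce the arrangement to a weighted labeled tree on the components, encode by $b_\xi$ the face at which each component meets its neighbor, and encode the tree itself by the Pr\"ufer-type string $a_1,\dots,a_{t-1}$. For each of these two layers the crux is to verify that the proposed map is a genuine bijection onto the \emph{full} product of the stated intervals -- equivalently, that the product of the interval sizes equals the number of arrangements -- which is precisely where the weighted Pr\"ufer identity must be invoked and where an off-by-one in a range, or a mishandled global reflection of the sphere, would be fatal. I would also have to check that the per-node reference choices of the three layers do not interact, so that the sub-tuples are independent and their concatenation is a bijection onto the product codomain.

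Finally I would assemble the complexity bounds. For $\Phi$, the SPQR-trees of all blocks and the block–cut structure are computed in $O(n)$ total, every rotation system and face list is traced in $O(n)$, and reading off each feature (cyclic-order index, flip bit, face index, Pr\"ufer symbol) is a local constant- or linear-time computation, so $\Phi$ runs in $O(n)$; the linearization via \cref{le:tuple-number} is also $O(n)$. For $\Phi^{-1}$ the only super-linear ingredient is the reconstruction of the tree structures from their Pr\"ufer-type codes and the incremental assembly of the embedding, which I would carry out with a union-find structure that locates, as pieces are merged, the current representative of each face or rotation being glued; with union by rank and path compression this costs $O(n\,\alpha(n))$, and every other step of the inversion is linear, giving the stated bound. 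Once the bijectivity of the two Pr\"ufer layers with the exact ranges of the statement is established, \cref{th:general} follows by composing the three bijections and converting the tuple to an integer.
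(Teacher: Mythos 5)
Your overall architecture is the paper's: the same three-layer decomposition (SPQR-tree choices per block, arrangements around cut-vertices, nesting of connected components), the same nesting-tree-plus-face-tuple encoding with a Pr\"ufer variation for the $a_\xi, b_\xi$ (\cref{th:non-connected}), the same treatment of $p_\xi, r_\xi$ via $P$-node permutations and $R$-node flips with canonical references (\cref{th:biconnected}), and the same linearization via \cref{le:tuple-number}. The genuine gap is at the cut-vertex layer. You dispose of the $d_{\xi,1},\dots,d_{\xi,b(v_\xi)-2}$ by appeal to a ``weighted Cayley/Pr\"ufer bijection'' that produces the factors $(\delta_v-1)(\delta_v-2)\cdots$, but no such off-the-shelf bijection applies to non-crossing arrangements of blocks around a cut-vertex; constructing one is exactly the paper's main new technical contribution (\cref{th:cut-vertex}, \cref{ssec:tuple2embedding,ssec:embedding2tuple}). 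The difficulty you elide is that blocks may be nested inside faces of already-placed blocks, so ``insert block $j$ after the edge in position $d_{j-2}$'' is only well defined after two non-obvious rules are imposed: the selected edge must be replaced in the reference sequence $\mathcal{S}$ by $first_j$ so it can never be selected again (otherwise distinct tuples collide on the same embedding), and the degenerate case where the selected edge already belongs to block $j$'s own partial embedding must be rerouted via a frozen copy $\mathcal{S}'$ to an insertion wrapping around blocks $1$ and $2$ (otherwise some arrangements are unreachable and the map is not onto the product of the stated intervals). Without these rules the naive insertion map is neither injective nor surjective, so your plan stops precisely where the real work begins.

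Two further slips. First, your remark that bijectivity is ``equivalently'' the identity between the product of interval sizes and the number of arrangements is logically backwards: Cai's count $E_v=\prod_{j=1}^{b(v)}\delta_{v,j}\prod_{j=1}^{b(v)-2}(\delta_v-j)$ only upgrades injectivity of a \emph{specific, fully defined} map to bijectivity. The paper still must prove that its $\varphi_v^{-1}$ is injective and that $\varphi_v$ inverts it, which it does through the auxiliary ordered tree $\Tau^B$ with the $jump$ pointers and a case analysis over where the selected edges live (\cref{se-app:connected}); nothing in your sketch substitutes for that argument. Second, a complexity misattribution: decoding the Pr\"ufer-type code of the nesting tree is linear time (the paper relies on known linear-time Pr\"ufer ranking/unranking), and the $O(n\alpha(n))$ term in $\Phi^{-1}$ arises instead in the cut-vertex unranking, where a union-find structure is needed to decide whether the edge selected by $d_{j-2}$ currently belongs to the partial embedding being merged (case 1 versus case 2). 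This does not change the stated bound, but it locates the superlinear cost exactly inside the step your proposal leaves unconstructed.
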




The time bounds discussed in this paper assume the RAM model with uniform costs, as it is usual in the literature whenever the involved numbers can be represented with $O(\log n)$ bits~\cite{Aho1974}. This same model is sometimes assumed even when ranking permutations~\cite{DBLP:journals/ipl/MyrvoldR01}, a task that implies representing numbers up to $O(n!)$, hence using $O(\log (n!)) = O(n \log n)$ bits. Indeed, many libraries could be used to overcome this limitation with a little increase in cost per operation.

\cref{th:general} is proved in three steps.
First, in \cref{th:biconnected} we rank and unrank the planar embeddings of a biconnected component of $G$. This is simply achieved by exploiting the SPQR-tree associated with the biconnected component and by using a technique (see \cite{DBLP:journals/ipl/MyrvoldR01}) that allows to rank or unrank a permutation in linear time. More details are given in \cref{se-app:biconnected-graphs}.
Second, in \cref{se:simply-connected} we rank and unrank the arrangements of the planar embeddings of the biconnected components sharing a specific cut-vertex. To do that, we develop a new technique that associates a natural number to the position of a component around the cut-vertex.
Finally, in \cref{se:non-connected}, we rank and unrank the nesting of the connected components of $G$. To do that we devise a variation of the Pr{\"u}fer sequences, targeted to describe such a nesting. We believe this technique can have other applications. Observe that variations of the Pr{\"u}fer sequences have been already studied for different problems. For instance, in \cite{k-epcac-03} a variation is studied to represent the arrangements of the biconnected components into a block-cutvertex tree.
Note that, although related to this paper, the cited results in~\cite{cai1993counting} do not introduce efficient techniques that, given a planar embedding of $G$, can associate a natural number with the embedding and vice-versa.
We also observe that our techniques require identifying several components of $G$. The identification techniques that we adopt are partly introduced in \cref{se:preliminaries} and partly shown where they are first used.
An example that illustrates how \cref{th:general} works is shown in \cref{fig:example}

%
%
%

\section{Preliminaries}\label{se:preliminaries}

We assume familiarity with the basic concepts of graphs and planarity and, therefore, we report here only the definitions that will be used extensively in this work. For further information, the reader could refer to \cite{DBLP:books/ph/BattistaETT99,DBLP:reference/crc/2013gd}. We assume all the graphs to be undirected.

\noindent{\bf Connectivity.} The structure of a graph $G = (V,E)$ can be analyzed in terms of connectivity of its vertices. $G$ is \emph{connected} if for each $u, v \in V$ with $u \neq v$ it exists a path in $G$ connecting $u$ and $v$. A vertex whose removal makes $G$ non-connected is a \emph{cut-vertex}.
A non-connected graph can be subdivided into maximal connected subgraphs, called \emph{connected components}. In general, a graph is $k$-connected if for each pair of distinct vertices $u, v \in V$ there exist $k$ vertex-disjoint paths connecting $u$ and $v$. 2-connected and 3-connected graphs are also called \emph{biconnected} and \emph{triconnected}, respectively. It is possible to subdivide a connected (biconnected) graph into maximal biconnected (triconnected) subgraphs, called biconnected (triconnected) components. The \emph{block-cutvertex tree} of a connected graph (see \cref{sse:block-cutvertex-tree}) shows the incidence relationships between its \emph{blocks} (biconnected components) and its cut-vertices. A planar biconnected graph can be decomposed into its triconnected components. This decomposition is represented by its SPQR-tree (see \cref{sse:spqr-trees}).

\noindent{\bf Planarity and Embeddings.} Given a graph $G=(V,E)$, a \emph{drawing of $G$ on the sphere} is a mapping of each vertex in $V$ to a point of the sphere and of each edge $(u,v)$ to a Jordan curve on the sphere from the point corresponding to $u$ to the point corresponding to $v$. A drawing on the sphere is \emph{non-intersecting} (or \emph{planar}, with a slight bending of the terminology) if no two edges intersect except at common endpoints.  
A graph is \emph{planar} if it admits a planar drawing on the sphere. A drawing $\Gamma$ of a graph on the sphere determines a subdivision of the sphere into connected regions, called \emph{faces}, and a circular ordering of the edges incident to each vertex, called \emph{rotation system}. 
Visiting the (not necessarily connected) border of a face $f$ of $\Gamma$ in such a way to keep
$f$ to the left, determines a set of circular lists of vertices. Such a set is the \emph{boundary}
of $f$. Two drawings on the sphere are equivalent if they have the same rotation system and the same
face boundaries. An \emph{embedding} is an equivalence class of planar drawings on the sphere. 
Given a drawing $\Gamma$ of $G$ on the sphere and a point $p$ internal to a face $f$, the \emph{stereographic projection} \cite{stereo} of $\Gamma$ on the plane with $p$ as center of projection produces a \emph{drawing on the plane} $\Gamma'$ of $G$, where $f$ is the unbounded face, called \emph{outer face}.
Namely, let $P$ be the plane tangent to the sphere at the point diametrically opposed to $p$. Each point $x \neq p$ of the sphere is mapped to the point $x'$ of $P$ that lies on the line through $p$ and $x$.
Observe that if no two edges of $\Gamma$ cross, i.e., if $\Gamma$ is planar, also no two edges of $\Gamma'$ cross (we use the same term \emph{planar} also for $\Gamma'$).  
All drawings on the plane obtained by projecting $\Gamma$ preserve the same embedding as $\Gamma$, irrespective to the face chosen as the outer face. An \emph{embedding on the plane} is the equivalence class of the drawings on the plane that share the same embedding and the same outer face. In the following all the projections on the plane  will be stereographic.

\noindent{\bf Identifiers.} Given an $n$-vertex graph $G$ we have \emph{vertex identifiers} that uniquely associate each vertex with an integer in $[1 \dots n]$. We also have \emph{edge identifiers}: an edge $(u,v)$ is identified by the pair of identifiers of $u$ and of $v$ in ascending order. Hence, often when referring to an edge $(u,v)$ we assume that the identifier of $u$ is smaller than the identifier of $v$.
The edges of the skeletons of SPQR-trees are identified in the same way. All SPQR-trees that we consider are rooted at a reference edge, which is the edge with the minimum identifier.

\noindent{\bf Pr{\"u}fer sequences.} An example of ranking algorithm, that will be used later in this work, is given by the \emph{Pr{\"u}fer sequence}.
In \cite{prufer1918neuer} Pr{\"u}fer demonstrated a bijection between unrooted unordered trees of $n$ nodes, with nodes labeled by distinct values in $[1\dots n]$, and tuples of length $n-2$ of values in $[1\dots n]$. The result was used to demonstrate the Cayley's formula \cite{cayley1878theorem}, that states that the number of unrooted labeled trees of $n$ nodes is $n^{n-2}$. An explanation of the algorithm is given in \cref{se-app:prufer}.
%





The following lemma establishes a bijection between natural numbers and tuples of natural numbers whose elements have bounded values, generalizing to tuples a result that was already known for decision trees \cite[Theorem 3.2]{bender2005foundations}. The proof can be found in \cref{se-app:rank-tuple}.

\begin{restatable}{lemma}{leTuple}\label{le:tuple-number}
Let $P$ be the set of all tuples $\langle b_1, b_2, \dots, b_n \rangle$ such that $n \geq 1$, $0 \leq b_i < B_i$ ($i=1 \dots n$), and such that $b_i$ and $B_i$ are natural numbers. There exists a bijective function $\psi$ with domain $P$ and codomain the natural numbers in $[0 \dots \left( \prod_{i=1}^n B_i \right) -1]$. Both $\psi$ and its inverse can be computed in $O(n)$ time.
\end{restatable}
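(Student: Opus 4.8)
The plan is to establish the bijection $\psi$ by the standard mixed-radix (factorial-number-system-style) encoding, generalized so that each digit $b_i$ has its own radix $B_i$ rather than a fixed base. Concretely, I would define $\psi(\langle b_1, \dots, b_n \rangle) = b_1 + b_2 B_1 + b_3 B_1 B_2 + \dots + b_n \prod_{i=1}^{n-1} B_i$, i.e.\ $\psi = \sum_{j=1}^{n} b_j \prod_{i=1}^{j-1} B_i$ with the empty product equal to $1$. (The reverse convention, placing $b_1$ as the most significant digit, works equally well; I would fix one and keep it throughout.) The first step is to verify that $\psi$ indeed lands in the claimed codomain: since each $b_j \le B_j - 1$, a telescoping argument shows $\sum_{j=1}^{n} b_j \prod_{i=1}^{j-1} B_i \le \sum_{j=1}^{n} (B_j-1)\prod_{i=1}^{j-1} B_i = \prod_{i=1}^{n} B_i - 1$, so the image is contained in $[0 \dots (\prod_{i=1}^n B_i) - 1]$.

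Next I would prove bijectivity. Since the domain $P$ (for a fixed $n$ and fixed bounds $B_1, \dots, B_n$) has exactly $\prod_{i=1}^n B_i$ elements and the codomain has the same cardinality, it suffices to show $\psi$ is injective, or equivalently to exhibit a well-defined inverse. I find it cleanest to construct the inverse explicitly and argue it is a two-sided inverse. Given a target $N \in [0 \dots (\prod_i B_i) - 1]$, the decoding proceeds digit by digit: set $N_1 = N$, and for $j = 1, \dots, n$ let $b_j = N_j \bmod B_j$ and $N_{j+1} = \lfloor N_j / B_j \rfloor$. The key step is the invariant that after extracting $b_1, \dots, b_{j-1}$ the residual $N_j$ satisfies $0 \le N_j < \prod_{i=j}^{n} B_i$, which guarantees $b_j \in [0 \dots B_j - 1]$ and, in particular, that the final residual $N_{n+1}$ equals $0$; this is exactly the analogue of the uniqueness of base-$b$ representation and follows by induction using the same telescoping bound as above. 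Checking that this decoding composed with $\psi$ returns $N$, and that encoding followed by decoding returns the original tuple, then establishes the bijection.

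Finally I would address the complexity claims. Computing $\psi$ in $O(n)$ time is immediate by evaluating the sum via Horner's rule, maintaining a running product of the $B_i$'s (or, in the most-significant-first convention, a running Horner accumulator), which uses one multiplication and one addition per coordinate. The inverse likewise runs in $O(n)$ time, performing one division-with-remainder per coordinate. Under the RAM model with uniform cost already invoked in the paper, each arithmetic operation is $O(1)$, so both directions are $O(n)$ as claimed. The main subtlety worth flagging is that the bounds $B_i$ vary with $i$, so unlike ordinary positional notation one must carry the partial products $\prod_{i<j} B_i$ explicitly; once this bookkeeping is set up, the telescoping inequality does all the real work, and the only genuine obstacle is stating the residual invariant precisely enough that both the range of each $b_j$ and the termination at $N_{n+1}=0$ fall out cleanly.
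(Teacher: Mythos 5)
Your proposal is correct and takes essentially the same approach as the paper: both use the mixed-radix positional encoding (yours least-significant-digit first, the paper's most-significant first via the Horner-style recursion $p_i = p_{i-1}\cdot B_i + b_i$), decode with the same $\bmod$/floor recursion, and obtain $O(n)$ time with one arithmetic operation pair per coordinate under the uniform-cost RAM model. The only cosmetic difference is how bijectivity is verified---the paper proves surjectivity by induction and invokes equal cardinalities to get injectivity, while you exhibit a two-sided inverse directly via the residual invariant---but these are interchangeable routine arguments around the same construction.
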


\noindent{\bf Biconnected Graphs.}\label{se:biconnected-graphs}
Let $\mu$ be a node of the SPQR-tree $\mathcal{T}^T$ of a biconnected planar graph $G$ rooted at the edge with minimum identifier. Denote by $d(\mu)$ the depth of $\mu$ in $\mathcal{T}^T$ and by $e(\mu)$ the id of the edge with the minimum identifier in the pertinent graph of $\mu$. We assign to $\mu$ the identifier $\langle d(\mu), e(\mu) \rangle$. We call \emph{conventional order} the order of the nodes of $\mathcal{T}^T$ induced by their identifiers. 

\begin{restatable}{theorem}{thBiconnected}\label{th:biconnected}
Let $G$ be an $n$-vertex biconnected planar graph and denote by $\nu_1, \dots, \nu_y$ and by $\mu_1, \dots, \mu_z$ the $P$-nodes and the $R$-nodes of the SPQR-tree~$\mathcal{T}^T$ of $G$ in conventional order, respectively. There exists a bijective function $\chi$ whose domain is the set of the planar embeddings of~$G$ and whose codomain is the set of tuples of natural numbers $\langle p_1, \dots, p_y, r_1, \dots, r_z \rangle$, where $p_\xi \in [0 \dots (\delta(\nu_\xi)-1)! - 1]$ for $\xi=1, \dots, y$ and $r_\xi \in [0\dots1]$ for $\xi=1, \dots, z$. Both the function $\chi$ and its inverse $\chi^{-1}$ can be computed in $O(n)$ time.
\end{restatable}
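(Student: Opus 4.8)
The plan is to exploit the classical fact that the SPQR-tree $\mathcal{T}^T$ of a biconnected planar graph encodes all of its embeddings through independent local choices, and to turn each such choice into an integer code via linear-time permutation ranking. First I would recall from the theory of SPQR-trees \cite{DBLP:journals/siamcomp/BattistaT96,DBLP:journals/algorithmica/BattistaT96} that, once $\mathcal{T}^T$ is rooted at the reference edge, an embedding of $G$ on the sphere is in bijection with the combination of: a cyclic ordering of the branches at each $P$-node, one of the two reflections at each $R$-node, and no freedom at $S$-nodes. The rooting is exactly what makes these choices canonical: at each $P$-node the virtual edge toward the parent (or the reference edge itself, at the root's child) singles out a reference branch, so that the cyclic order is recorded by a permutation of the remaining $\delta(\nu_\xi)-1$ branches, of which there are $(\delta(\nu_\xi)-1)!$; and at each $R$-node a canonical reference embedding of its triconnected skeleton (fixed, e.g., by orienting the skeleton so that the rotation at its minimum-identifier vertex is lexicographically smallest) provides the orientation against which the flip is measured. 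Since the map from local choices to rotation systems is a bijection, the number of embeddings of $G$ equals $\prod_{\xi=1}^{y}(\delta(\nu_\xi)-1)!\cdot 2^{z}$, which is exactly the size of the claimed codomain.

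Next I would define $\chi$ componentwise. For a $P$-node $\nu_\xi$, fixing the reference branch and reading the other branches in the cyclic order induced by the given embedding -- each branch identified, say, by the minimum edge identifier of its pertinent graph -- yields a permutation of $\delta(\nu_\xi)-1$ elements, which the linear-time ranking function for permutations \cite{DBLP:journals/ipl/MyrvoldR01} maps bijectively onto $p_\xi\in[0\dots(\delta(\nu_\xi)-1)!-1]$. For an $R$-node $\mu_\xi$ I set $r_\xi=0$ if the restriction of the embedding to its skeleton equals the chosen reference embedding and $r_\xi=1$ otherwise. Listing these values in the conventional order gives $\langle p_1,\dots,p_y,r_1,\dots,r_z\rangle$. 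The inverse $\chi^{-1}$ reverses each step -- unranking every $p_\xi$ to a cyclic branch ordering and setting every $R$-node to its reference embedding or its reflection according to $r_\xi$ -- and then assembles the global embedding along $\mathcal{T}^T$. That $\chi$ and $\chi^{-1}$ are mutually inverse is immediate from the bijection above, as long as the same branch identifiers and the same reference embeddings are used in both directions.

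For the running time I would build $\mathcal{T}^T$ in $O(n)$ time, compute the identifiers $\langle d(\mu),e(\mu)\rangle$ and hence the conventional order by a single traversal, and use that the total size of all skeletons is $O(n)$; since ranking and unranking a permutation of $k$ elements take $O(k)$ time, summing over all nodes yields $O(n)$ for both $\chi$ and $\chi^{-1}$. I expect the main obstacle to be the bookkeeping that makes the two per-node encodings strictly canonical and mutually consistent: reading the cyclic order of the branches off an arbitrary embedding relative to a fixed reference branch, and detecting or installing the reflection state of each rigid skeleton, must be specified carefully enough (in terms of the edge identifiers and of the rooting) that $\chi^{-1}\circ\chi$ is literally the identity while every operation stays linear. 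The combinatorial core -- that embeddings correspond to independent local choices -- is the classical SPQR-tree decomposition, so the real effort lies in the constructive, identifier-based realization of the local codes rather than in a new combinatorial argument.
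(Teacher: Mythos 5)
Your proposal is correct and takes essentially the same route as the paper's proof: it reduces an embedding to independent choices at the $P$- and $R$-nodes via the classical SPQR-tree correspondence (Lemma 4.2 of Di Battista--Tamassia), encodes each $P$-node by Myrvold--Ruskey permutation ranking of the non-reference skeleton edges and each $R$-node by one bit measured against an identifier-based canonical ``first'' embedding, and charges $O(n)$ total time using the linear size of all skeletons and the linear-time SPQR-tree construction. The only difference is cosmetic: your canonical reference for rigid skeletons (lexicographically smallest rotation at the minimum-identifier vertex) is phrased differently from the paper's convention (the minimum-identifier pole and its two smallest-identifier incident edges), but it plays exactly the same role.
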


\section{Simply Connected Graphs}\label{se:simply-connected}

Let $G$ be a connected graph and let $B_1, B_2, \dots, B_x$ be the set of its biconnected components. Assume that each biconnected component $B_j$ has a fixed planar embedding on the sphere~${\cal E}^\circ_j$. 
We describe a bijection from the embedding of $G$ on the sphere, restricted to those that preserve the embedding ${\cal E}^\circ_j$ of each block $B_j$, and a sequence of natural numbers, where each number describes the arrangement choices around each cut-vertex of $G$ independently.
For each cut-vertex $v$, consider the $b(v)$ biconnected components of $G$ that share $v$ and rename them $B_1, B_2, \dots, B_{b(v)}$. 
Cai proved in~\cite[Lemma 13]{cai1993counting} the following formula for the number $E_v$ of the distinct planar arrangements on the sphere of the embeddings ${\cal E}^\circ_j$ around $v$, where $\delta_v$ is the degree of~$v$ and $\delta_{v,j}$ is the degree of $v$ restricted to the biconnected component $B_j$.
\begin{equation}
    E_v = \prod_{j=1}^{b(v)} \delta_{v,j} \prod_{j=1}^{b(v)-2}(\delta_v-j)
\end{equation}

We now state the main results of this section. We then provide a procedure to transform a tuple into an embedding of the graph (\cref{ssec:tuple2embedding}) and vice versa (\cref{ssec:embedding2tuple}). The correctness of the procedures is discussed in \cref{se-app:connected}. 

We have the following counterpart of Lemma 13 of~\cite{cai1993counting}.

\begin{restatable}{theorem}{thConnected}\label{th:cut-vertex}
    Let $G$ be a connected planar graph and let $v$ be a cut-vertex of $G$. Let $B_1, B_2, \dots, B_{b(v)}$ be the biconnected components of $G$ containing $v$ and let $\mathcal{E}^\circ_1, \mathcal{E}^\circ_2, \dots, \mathcal{E}^\circ_{b(v)}$ be their planar embeddings on the sphere. 
    There exists a bijection $\varphi_v$ whose domain is the set of planar embeddings on the sphere of the subgraph $G(v) = B_1 \cup B_2 \cup \dots B_{b(v)}$ preserving $\mathcal{E}^\circ_1, \mathcal{E}^\circ_2, \dots, \mathcal{E}^\circ_{b(v)}$ and whose codomain is a sequence of natural numbers $\langle c_{1}, \dots, c_{b(v)}, d_{1}, \dots, d_{b(v)-2} \rangle$, where $c_{j} \in [0 \dots \delta_{v,j}-1]$, for $j = 1, \dots, b(v)$ and $d_{j} \in [0 \dots \delta_{v}-j-1]$, for $j = 1, \dots, b(v)-2$.
    The function $\varphi_v$ can be computed in $O(n)$ time, and its inverse $\varphi_v^{-1}$ can be computed in $O(n\alpha(n))$ time, where $\alpha$ is the inverse of the Ackermann function.
\end{restatable}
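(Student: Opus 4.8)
The plan is to establish the bijection not abstractly but by exhibiting two explicit, mutually inverse procedures --- one sending a planar embedding of $G(v)$ to a tuple, the other sending a tuple back to an embedding --- and then to argue their correctness and running times. The first observation that makes this feasible is that the codomain has exactly the right size: the number of admissible tuples is $\prod_{j=1}^{b(v)}\delta_{v,j}\cdot\prod_{j=1}^{b(v)-2}(\delta_v-j)$, since $c_j$ ranges over $\delta_{v,j}$ values and $d_j$ over $\delta_v-j$ values, and this is precisely the quantity $E_v$ counted by Cai in the formula recalled above. Hence, once both procedures are shown to be well defined and inverse to each other, bijectivity follows. A second, structural, simplification is that any two of the blocks $B_1,\dots,B_{b(v)}$ share only the vertex $v$ (distinct biconnected components meet in at most one vertex), so an embedding of $G(v)$ preserving every $\mathcal{E}^\circ_j$ is determined solely by the cyclic arrangement of the blocks around $v$, i.e.\ by the combined rotation at $v$. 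Thus the domain of $\varphi_v$ is exactly the set of planar (non-crossing) interleavings of the fixed rotations of the blocks at $v$.

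I would interpret the two families of coordinates as follows. Each $c_j$ selects a \emph{reference edge} of $B_j$ among the $\delta_{v,j}$ edges incident to $v$ in $B_j$; this breaks the cyclic list of those edges into a linear one and fixes how $B_j$ is anchored to the rest of the arrangement, accounting for the factor $\prod_j\delta_{v,j}$. The coordinates $d_1,\dots,d_{b(v)-2}$ encode the cyclic nesting of the blocks around $v$ by a Pr{\"u}fer-like, factorial-number scheme: the decreasing ranges $\delta_v-1,\delta_v-2,\dots,\delta_v-b(v)+2$ reflect an accounting in which, after a two-block reference frame has been fixed, each of the remaining $b(v)-2$ blocks is attached in turn to one of the positions still available among the $\delta_v$ edge-slots at $v$, the number of available positions decreasing by one at each step. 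The sanity check for this reading is the special case in which every block has a single edge at $v$ (so all reference choices are trivial): there $G(v)$ is a star at $v$, an embedding is just a cyclic order of the $b(v)=\delta_v$ edges, and the $d_j$ reduce to the standard factorial encoding of a cyclic permutation, whose count is $(\delta_v-1)!=\prod_{j=1}^{\delta_v-2}(\delta_v-j)$, matching $E_v$.

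To compute $\varphi_v$ (embedding to tuple) I would read each reference edge directly and then extract $d_1,\dots,d_{b(v)-2}$ by the analogue of linear-time Pr{\"u}fer encoding: repeatedly identify a block that is a \emph{leaf} of the nesting (one that occupies a single corner of the remainder), record the slot to which it is attached, and delete it, using a pointer/queue discipline so that the whole pass costs $O(n)$. To compute $\varphi_v^{-1}$ (tuple to embedding) I would run the inverse decoding: the reference edges rebuild each block's linear edge-list, and the $d_j$ are consumed one by one, each time inserting the next block at the $d_j$-th still-available slot and reinstating slots as dictated by the decoding. The only nontrivial operation is locating and deleting the $d_j$-th currently available slot among a shrinking set; maintaining this set with a union-find \emph{next-available} structure makes each such query cost $O(\alpha(n))$ amortized, for a total of $O(n\alpha(n))$.

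The main obstacle is the combinatorial core: proving that the two procedures are genuinely inverse and, in particular, that the factorial/Pr{\"u}fer scheme with ranges $\delta_v-j$ reproduces \emph{exactly} the planar arrangements, neither over- nor under-counting. Concretely, I expect the delicate point to be showing that planarity --- the non-crossing condition on the interleaving of the block rotations --- is automatically enforced by the decoding for every admissible tuple, and, conversely, that every valid embedding yields a unique admissible tuple. A naive \emph{insert each block into a current corner} scheme, by contrast, produces position counts that depend on the individual $\delta_{v,j}$ and fails to match $E_v$, so the accounting that yields the clean ranges $\delta_v-j$ must be justified with care. Once this invariant is in place, well-definedness, mutual inverseness, and the stated $O(n)$ and $O(n\alpha(n))$ bounds all follow.
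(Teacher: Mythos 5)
Your high-level skeleton coincides with the paper's actual strategy: invoke Cai's count $E_v=\prod_{j}\delta_{v,j}\prod_{j}(\delta_v-j)$ to see that domain and codomain have equal cardinality (so that exhibiting an injective decoding map suffices), interpret each $c_j$ as the choice of a first/reference edge of $B_j$ at $v$ (which the paper also uses to fix the projection of $\mathcal{E}^\circ_j$ to the plane), give two explicit mutually inverse procedures, and obtain $O(n)$ ranking and $O(n\alpha(n))$ unranking with a union-find structure. Your structural reduction (the embedding of $G(v)$ is determined by the rotation at $v$, and planarity forbids interleaving the blocks' edge runs) and your star-graph sanity check are also sound.

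However, there is a genuine gap, and you in fact name it yourself without closing it: the entire content of the theorem is the concrete slot-accounting scheme that realizes the ranges $[0\dots\delta_v-j-1]$, and your proposal does not supply it. Your model --- each $d_j$ picks ``the $d_j$-th still-available slot'' from a set that shrinks by one per step --- does not work as stated: as you observe, corner-based insertion yields position counts depending on the partial sums of the $\delta_{v,i}$, not on $\delta_v-j$; and a deletion-based discipline leaves unanswered the decisive question of what a value of $d_{j-2}$ means when it points into the block $B_j$ currently being inserted. The paper's mechanism (\cref{ssec:tuple2embedding}) is different in exactly these respects: the selection sequence $\mathcal{S}$ has \emph{constant} length $\delta_v$ throughout, containing the edges of \emph{all} blocks including those not yet merged (their first edges appended at the tail in reverse order); the range of $d_k$ shrinks only because the last $k$ tail cells are excluded; a consumed anchor edge is \emph{replaced} by $first_j$ rather than deleted, so no later block can be wedged between an anchor and the component glued to it; anchoring to an edge of any \emph{other} partial embedding --- even a not-yet-processed one --- is legal (case~1), while the self-referential case~2 is resolved by a wrap-around insertion adjacent to $first_2$/$last_2$ consulting a frozen copy $\mathcal{S}'$. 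The injectivity of the decoder and the identity $\varphi_v\circ\varphi_v^{-1}=\mathrm{id}$ are then proved by a case analysis over these rules, and the $O(n)$ ranking direction requires a further idea absent from your leaf-deletion Pr{\"u}fer pass: the auxiliary nesting tree $\Tau^B$ with the marked path $\pi_2$ and precomputed $jump$ pointers, needed precisely to recover case-2 values of $d_j$ in linear time (\cref{ssec:embedding2tuple}). So what you have is a correct proof plan whose ``main obstacle'' paragraph is exactly the theorem's core, left unproved; without the constant-length-$\mathcal{S}$/replacement/two-case machinery (or an equivalent), neither well-definedness of the decoder on all admissible tuples nor the exact ranges $\delta_v-j$ follows.
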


Observe that the product of the ranges of the elements in the tuple of \cref{th:cut-vertex} $\prod_{i=1}^k \delta_{v,i} \prod_{h=1}^{k-2} \delta_v-h$ is exactly $E_v$ and, therefore, the bijection $\psi$ of \cref{le:tuple-number} applied to $\langle c_1, \dots, c_{b(v)}, d_1, \dots, \eta_{b(v)-2} \rangle$ produces a number in the interval $[0 \dots E_v -1]$.

Cai also proved~\cite[Theorem 4]{cai1993counting} that the number $E_G$ of distinct planar embeddings on the sphere of $G$, restricted to those that preserve the embedding ${\cal E}_j$ of each block $B_j$, is given by the product of the above numbers for all the $w$ cut-vertices of $G$, i.e., $E_G = \prod_{i=1}^{w} E_{v_i}$.
We have the following counterpart of Theorem 4 of \cite{cai1993counting} in terms of bijections.

\begin{restatable}{lemma}{leConnected}\label{le:connected-reduction-to-cut-vertex}
Let $G$ be a connected graph and let $\mathcal{E}_1, \mathcal{E}_2, \dots, \mathcal{E}_x$ be planar embeddings of its biconnected components $B_1, B_2, \dots, B_x$. 
For each cut-vertex $v_i$ of $G$, $i=1, \dots, w$, let $\varphi_{v_i}$ be a bijection from the arrangements around $v_i$ of the embeddings of those biconnected components that are incident to $v_i$ to a number in the interval $[0 \dots E_{v_i}-1]$.
There exists a bijection $\varphi$ from the embeddings on the sphere of $G$ whose restriction to $B_j$ is ${\cal E}_j$ for $j = 1, \dots, x$ to a sequence $\langle c_{1,1}, \dots, c_{1,b(v_1)}, \dots, c_{w,1}, \dots, c_{w, b(v_w)}, d_{1,1}, \dots, d_{1,b(v_1)-2}, \dots, d_{w,1}, \dots, d_{w,b(v_w)-2} \rangle$, where $c_{\xi,\sigma} \in [0 \dots \delta_{v_\xi,\sigma}-1]$, for $\xi = 1, \dots, w$ and $\sigma = 1, \dots, b(v_\xi)$; and $d_{\xi, \sigma} \in [0 \dots \delta_{v_\xi}-\sigma-1]$, for $\xi = 1, \dots, w$ and $\sigma = 1, \dots, b(v_\xi)-2$.
\end{restatable}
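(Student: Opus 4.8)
The plan is to factor the map $\varphi$ through the \emph{local arrangements} of the blocks around the cut-vertices, and to prove that the factor recording these arrangements is already a bijection onto their Cartesian product; the final claim then follows by post-composing each coordinate with $\varphi_{v_i}$, in the tuple form produced by \cref{th:cut-vertex}, and concatenating. Concretely, I would first define an intermediate map $\Psi$ that sends an embedding $\mathcal{E}$ of $G$ (preserving every $\mathcal{E}_j$) to the tuple $(A_1, \dots, A_w)$, where $A_i$ is the arrangement around $v_i$ obtained by restricting $\mathcal{E}$ to the subgraph $G(v_i) = B_1 \cup \dots \cup B_{b(v_i)}$ of blocks through $v_i$. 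Since all edges incident to $v_i$ lie in $G(v_i)$, this restriction leaves the rotation at $v_i$ unchanged and preserves each incident block's embedding, so $A_i$ is a legitimate element of the domain of $\varphi_{v_i}$ and $\Psi$ is well defined.

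I would then prove that $\Psi$ is injective by recalling that an embedding on the sphere of a connected graph is determined by its rotation system. The rotation at any vertex $u$ that is \emph{not} a cut-vertex is forced: $u$ belongs to a unique block $B_j$, all its incident edges lie in $B_j$, and since $\mathcal{E}$ restricts to $\mathcal{E}_j$ on $B_j$, the cyclic order at $u$ equals the one prescribed by $\mathcal{E}_j$. The rotation at a cut-vertex $v_i$, on the other hand, is part of the data of the arrangement $A_i$, namely it is the rotation of $v_i$ in the embedding of $G(v_i)$, which by the previous remark coincides with the rotation of $v_i$ in $\mathcal{E}$. Hence two embeddings with the same image under $\Psi$ share the same rotation at every vertex, have the same rotation system, and are therefore equal, which gives injectivity.

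For surjectivity I would avoid a direct realizability argument and instead count. By hypothesis each $\varphi_{v_i}$ is a bijection onto $[0 \dots E_{v_i}-1]$, so the set of arrangements around $v_i$ has exactly $E_{v_i}$ elements and the set of tuples $(A_1,\dots,A_w)$ has $\prod_{i=1}^{w} E_{v_i}$ elements. By Theorem~4 of~\cite{cai1993counting}, recalled above, the domain of $\Psi$, the embeddings of $G$ preserving all the $\mathcal{E}_j$, also has exactly $E_G = \prod_{i=1}^{w} E_{v_i}$ elements. An injective map between two finite sets of equal cardinality is a bijection, so $\Psi$ is a bijection onto the set of all tuples of arrangements. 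I expect this counting shortcut to be the cleanest route; the alternative, an explicit inductive gluing of the blocks along the block-cutvertex tree (inserting each leaf block into a face-corner of the partial embedding), is the part that would otherwise be delicate, precisely because one must invoke the tree structure to guarantee that the arrangement chosen at one cut-vertex never constrains the choice at another.

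Finally, I would compose $\Psi$ coordinatewise with the bijections of \cref{th:cut-vertex}: each arrangement $A_i$ is replaced by the tuple $\langle c_{i,1}, \dots, c_{i,b(v_i)}, d_{i,1}, \dots, d_{i,b(v_i)-2}\rangle$ with the stated ranges $c_{i,\sigma}\in[0\dots\delta_{v_i,\sigma}-1]$ and $d_{i,\sigma}\in[0\dots\delta_{v_i}-\sigma-1]$. Concatenating the $w$ tuples and reordering their entries so that all $c$-components precede all $d$-components yields exactly the sequence in the statement, and the resulting map $\varphi$ is a bijection as a composition of bijections, completing the proof.
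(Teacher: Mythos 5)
Your proposal is correct, but it takes a genuinely different route from the paper's own proof. The paper proves the lemma constructively in both directions: its forward map restricts the embedding around each cut-vertex and applies each $\varphi_{v_i}$ independently (exactly as your $\Psi$ does), but the inverse is built explicitly by incremental gluing --- starting from $\varphi_{v_1}^{-1}(\omega_1)$ and, for each subsequent cut-vertex $v_\ell$ already present in the current partial embedding, observing that $v_\ell$ lies in exactly \emph{one} block of that partial embedding (if it lay in two blocks $B_p$ and $B_q$ there would be a cycle traversing both, contradicting their being distinct biconnected components), so that $\varphi_{v_\ell}^{-1}(\omega_\ell)$ can be applied unambiguously with the current partial embedding playing the role of one of the blocks around $v_\ell$. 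This is precisely the block-cutvertex-tree gluing you deliberately sidestep. You instead get injectivity of $\Psi$ via rotation systems (rotations at non-cut-vertices are forced by the fixed block embeddings, rotations at cut-vertices are recorded in the $A_i$, and for a connected graph the rotation system determines the spherical embedding, since face boundaries follow by face tracing) and surjectivity by counting against Theorem~4 of~\cite{cai1993counting}. Both arguments are sound; your counting step is not circular, since Cai's count is an independent prior result, and the paper itself uses the same ``injective plus equal finite cardinality'' device inside the proof of \cref{th:cut-vertex}. What the paper's explicit inverse buys is an algorithm: the incremental gluing is what supports the unranking procedure and the $O(n\alpha(n))$ bound claimed in \cref{th:general}, which a purely counting-based surjectivity argument does not yield. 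What your route buys is brevity and a transparent isolation of \emph{why} the per-cut-vertex choices are mutually independent. Since the lemma as stated asserts only the existence of a bijection, with no time bounds, your proof fully establishes it.
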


By \cref{le:connected-reduction-to-cut-vertex} we have that the planar embeddings on the sphere of $G$, restricted to those that preserve the embedding ${\cal E}_i$ of each block $B_i$, can be ranked by combining the ranking function $\varphi_v$ from \cref{th:cut-vertex} for the arrangement of the biconnected components around each cut-vertex $v$ with \cref{le:tuple-number}.

\subsection{Transforming a tuple into an embedding}\label{ssec:tuple2embedding}

Let $G$ be a connected planar graph and let $v$ be a cut-vertex of $G$. Let $B_1, B_2, \dots, B_{b(v)}$ be the biconnected components of $G$ containing $v$ and let $\mathcal{E}^\circ_1, \mathcal{E}^\circ_2, \dots, \mathcal{E}^\circ_{b(v)}$ be their planar embeddings on the sphere.
We describe a transformation from a sequence of natural numbers $\langle c_{1}, \dots, c_{b(v)}, d_{1}, \dots, d_{b(v)-2} \rangle$, where $c_{j} \in [0 \dots \delta_{v,j}-1]$, for $j = 1, \dots, b(v)$ and $d_{j} \in [0 \dots \delta_{v}-j-1]$, for $j = 1, \dots, b(v)-2$, to a planar embedding on the sphere of the subgraph $G(v) = B_1 \cup B_2 \cup \dots B_{b(v)}$ preserving $\mathcal{E}^\circ_1, \mathcal{E}^\circ_2, \dots, \mathcal{E}^\circ_{b(v)}$.

The first phase is to use $c_1, c_2, \dots c_{b(v)}$ to transform the embeddings on the sphere $\mathcal{E}^\circ_1, \mathcal{E}^\circ_2, \dots, \mathcal{E}^\circ_{b(v)}$ into embeddings on the plane $\mathcal{E}^\ssquare_1, \mathcal{E}^\ssquare_2, \dots, \mathcal{E}^\ssquare_{b(v)}$ of $B_1, B_2, \dots, B_{b(v)}$, respectively. 
Precisely, for each $B_j$, $j=1, \dots, b(v)$, let $e^j_0, \dots e^j_{\delta_{v, j}-1}$ be the edges of $B_j$ incident to $v$ ordered according to their identifiers. We select the $c_j$-th edge of this order as the \emph{first edge} of ${\cal E}^\circ_j$ and denote it $first_j$. In the remainder of this section, we will assume that the edges $e^j_0, \dots e^j_{\delta_{v, j}-1}$ are labeled according to a counter-clockwise visit of the adjacency list of $v$ in $\mathcal{E}^\circ_j$ starting from $first_j = e^j_0$ and ending with $e^j_{\delta_{v, j}-1}$, which we also denote $last_j$.
We project each $\mathcal{E}^\circ_j$ on the plane, obtaining $\mathcal{E}^\ssquare_j$, so that the outer face of $\mathcal{E}^\ssquare_j$ is the face of $\mathcal{E}^\circ_j$ that comes before $first_j$ and after $last_j$ in counter-clockwise order around $v$.

Now that all the biconnected components are independently embedded on the plane, we merge them with $b(v)-1$ merging operations. We initialize $b(v)$ partial embeddings $\mathcal{P}^\ssquare_i=\mathcal{E}^\ssquare_i$, for $i=1,\dots, b(v)$. The $j$-th merging operation ($j=2,\dots,b(v)$) inserts, in the way that is specified below, $\mathcal{P}^\ssquare_j$ (\emph{source} of the merge) into a suitably chosen $\mathcal{P}^\ssquare_q$ (\emph{target} of the merge). After the merge, $\mathcal{P}^\ssquare_j$ will not be used anymore and its edges are considered as edges of $\mathcal{P}^\ssquare_q$.
After $b(v)-1$ such merge operations, we remain with a single partial embedding of $G(v)$ on the plane where all $\mathcal{E}^\ssquare_j$ have been merged into. This embedding on the plane can then be regarded as an embedding on the sphere.
The first merge operation merges source $\mathcal{P}^\ssquare_2$ into target $\mathcal{P}^\ssquare_1$. $\mathcal{E}^\ssquare_1$ and $\mathcal{E}^\ssquare_2$ are merged in such a way that in a counter-clockwise visit of the adjacency list of $v$ starting from $first_1$ the edges of ${\cal E}^\ssquare_1$ are not interleaved with the edges of ${\cal E}^\ssquare_2$ and $first_2$ comes after $last_1$.
We now show how to merge the remaining $b(v)-2$ partial embeddings $\mathcal{P}^\ssquare_j$, $j=3, \dots, b(v)$, guided by the values $d_j$, $j=1, \dots, b(v)-2$. 
First, we create a sequence $\mathcal{S}$ of edges incident to $v$, that will be used through the $b(v)-2$ iterations to suitably select the edge incident to $v$ that determines the current merge operation. Sequence $\mathcal{S}$ starts with the edges of $\mathcal{E}^\ssquare_1$ in the linear ordering starting from $first_1$ and ending with $last_1$. Then we append to $\mathcal{S}$ the edges of $\mathcal{E}^\ssquare_2$ in the linear ordering starting from $first_2$ and ending with $last_2$. Then, for $j = 3,\dots, b(v)$, we append to $\mathcal{S}$ all the edges of $\mathcal{E}^\ssquare_j$, with the exception of $first_j$, ordered according to their linear order in the embedding on the plane $\mathcal{E}^\ssquare_j$. Finally, $\mathcal{S}$ is closed by the edges $first_j$ for $j = b(v), b(v)-1, \dots, 3$. 
We assume now that the edges of $\mathcal{S}$ are labeled by their position in $\mathcal{S}$, starting from $0$. As all the edges adjacent to $v$ are in $\mathcal{S}$, we have that $|\mathcal{S}| = \delta_v$. We also store a copy of $\mathcal{S}$, called $\mathcal{S}'$. 

For $j = 3 \dots b(v)$ we use $d_{j-2}$ to perform a merge operation of $\mathcal{P}^\ssquare_j$ into a suitably-chosen partial embedding. Let $e_{d_{j-2}}$ be the edge with label $d_{j-2}$ in $\mathcal{S}$. Note that, since the upper-bound of the range $[0 \dots \delta_{v}-k-1]$ of the element $d_{k}$, for $k = 1, 2, \dots, b(v)-2$, is strictly decreasing as $k$ increases, the chosen edge $e_{d_{j-2}}$ excludes the last $j-2$ elements of $\mathcal{S}$, e.g., when choosing $e_{d_2}$, the last two cells, containing $first_3$ and $first_4$, are excluded.
We distinguish two cases:
\begin{enumerate*}[label=(\arabic*)]
    \item $e_{d_{j-2}}$ belongs to a partial embedding $\mathcal{P}^\ssquare_h$ ($1 \leq h \leq b(v)$) distinct from $\mathcal{P}^\ssquare_j$.
    \item $e_{d_{j-2}}$ belongs to $\mathcal{P}^\ssquare_j$.
\end{enumerate*}

\begin{figure}[t!]
    \captionsetup[subfigure]{justification=centering}
    \centering
    \begin{subfigure}{0.53\textwidth}
    \centering
    \includegraphics[page=11, width=\textwidth]{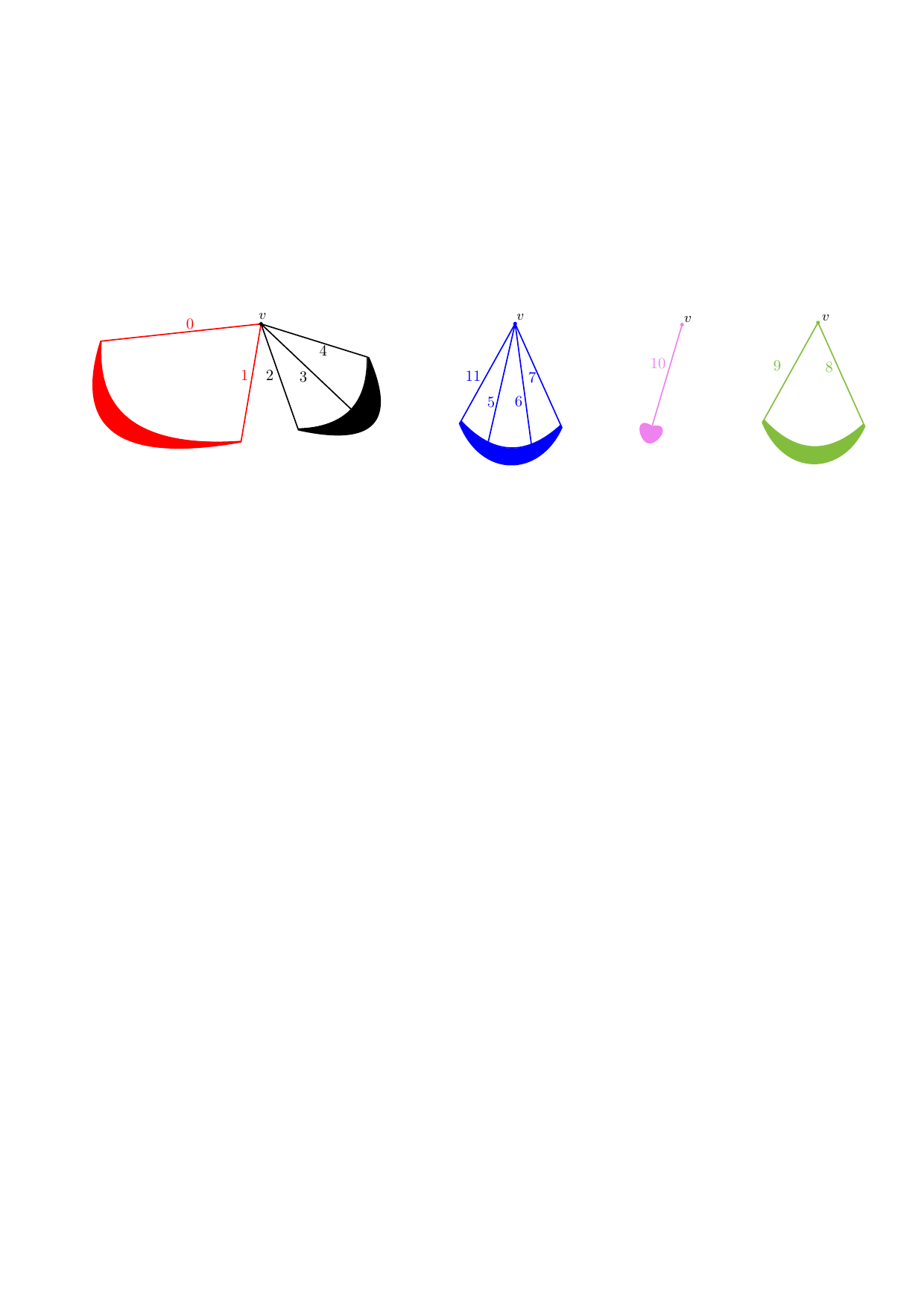}
    \subcaption{The components after the unranking of the first edges.}
    \end{subfigure}
    \hfil
    \begin{subfigure}{0.45\textwidth}
    \centering
    \includegraphics[page=12, width=\textwidth]{figures/ranking-cut-vertex.pdf}
    \subcaption{$\mu_3$ has attachment number 7, so its first edge ($e_{12}$) unifies with $e_2$ and $e_7$ goes after $e_5$.}
    \end{subfigure}
    \hfil
    \begin{subfigure}{0.4\textwidth}
    \centering
    \includegraphics[page=13, width=\textwidth]{figures/ranking-cut-vertex.pdf}
    \subcaption{The only edge of $\mu_4$ ($e_{11}$) unifies with $e_2$ after $e_{12}$.}
    \end{subfigure}
    \hfil
    \begin{subfigure}{0.57\textwidth}
    \centering
    \includegraphics[page=14, width=0.5\textwidth]{figures/ranking-cut-vertex.pdf}
    \subcaption{The first edge of $\mu_5$ ($e_{10}$) unifies with $e_2$ after $e_{11}$ and the edge $e_9$ goes after $e_5$, before $e_7$.}
    \end{subfigure}
    \caption{The unranking of the embedding 6-1-8.}\label{fig:unranking-6-1-8}
\end{figure}

If \emph{case 1} applies, we merge the embeddings of $\mathcal{P}^\ssquare_j$ into $\mathcal{P}^\ssquare_h$ in such a way that in a counter-clockwise visit of the adjacency list of $v$ starting from $first_h$, come all the edges of $\mathcal{P}^\ssquare_h$ up to $e_{d_{j-2}}$, then right after $e_{d_{j-2}}$ there is $first_{j}$ and consequently all the edges of $\mathcal{P}^\ssquare_j$, finally, after $last_{j}$ there are all the other edges of $\mathcal{P}^\ssquare_h$. We replace in $\mathcal{S}$ $e_{d_{j-2}}$ with $first_j$, so that $e_{d_{j-2}}$ will never be selected again and no other partial embedding will be inserted between $e_{d_{j-2}}$ and $first_j$. If $e_{d_{j-2}}=last_h$ then we update also $last_h=last_j$.

If \emph{case 2} applies, let $e_{d_{j-2}}$ be the edge in position $d_{j-2}$ in $\mathcal{S}'$. We use $\mathcal{S}'$ instead of $\mathcal{S}$ so that if $first_i$, for some $i<j$, has taken the place of $e_{d_{j-2}}$ in $\mathcal{S}$, we will not insert something between $e_{d_{j-2}}$ and $first_i$, but before $e_{d_{j-2}}$. We merge $\mathcal{P}^\ssquare_j$ with $\mathcal{P}^\ssquare_1$ by inserting $e_{d_{j-2}}$ and all the subsequent edges of $\mathcal{P}^\ssquare_j$ right after $last_2$. We will insert $first_j$ right after the edge $e^*$ that precedes $first_2$ in $\mathcal{P}^\ssquare_1$ and all following edges of $\mathcal{P}^\ssquare_j$ up to $e_{d_{j-2}}$ (not included) subsequently. Similarly to what has been done in \emph{case 1}, we then replace in $\mathcal{S}$ the edge $e^*$ with $first_j$, in such a way as to not insert any other edge between $e^*$ and $first_j$. 
Examples of the merge phase can be found in \cref{fig:unranking-0-0-0,fig:unranking-10-9-8,fig:unranking-6-1-8}.

Regarding the time taken by this algorithm, the first phase, using the $c_1, \dots, c_{b(v)}$, requires $O(n)$ time, whereas the merging phase takes $O(n\alpha(n))$ time using a union-find data structure \cite{10.1145/62.2160} to check which of the two cases applies.

\subsection{Transforming an embedding into a tuple}\label{ssec:embedding2tuple}

Let $G$ be a connected planar graph and let $v$ be a cut-vertex of $G$. Let $B_1, B_2, \dots,$ $B_{b(v)}$ be the biconnected components of $G$ containing $v$. Let $G(v) = B_1 \cup B_2 \cup \dots B_{b(v)}$. 
Let $\cal E^\circ$ be an embedding on the sphere of $G(v)$ and let $\mathcal{E}^\circ_1, \mathcal{E}^\circ_2, \dots, \mathcal{E}^\circ_{b(v)}$ be the restriction of $\cal E^\circ$ to the biconnected components $B_1, B_2, \dots, B_{b(v)}$, respectively. 
We describe a transformation from $\cal E^\circ$ to a sequence of natural numbers $\langle c_{1}, \dots, c_{b(v)}, d_{1}, \dots, d_{b(v)-2} \rangle$, where $c_{j} \in [0 \dots \delta_{v,j}-1]$, for $j = 1, \dots, b(v)$ and $d_{j} \in [0 \dots \delta_{v}-j-1]$, for $j = 1, \dots, b(v)-2$. 

First we compute the values $c_1 ,\dots, c_{b(v)}$.
For each $B_j$, $j=1, \dots, b(v)$, we consider the counter-clockwise order $\sigma_j$ of the edges incident to $v$ of $B_j$ and label the edges from $0$ to $\delta_{v,j}-1$ according to their position in $\sigma_j$ starting from the edge with minimum id. We perform a visit of the adjacency list of $v$ in counter-clockwise order starting from a random edge of ${\cal E}^\circ_1$.
The first edge of ${\cal E}^\circ_1$ encountered after all the edges of ${\cal E}^\circ_2$ will be called $first_1$ and $c_1$ will be set to its label. We perform a second visit starting from $first_1$, the first edge encountered of each ${\cal E}^\circ_j$, for $j=2,\dots,b(v)$, will be $first_j$ and $c_j$ will be set to its label.

We then label all the edges. We first put all the edges in a sequence $\mathcal{S}$ starting from $first_1$ and taking them, except for $first_i$ with $i=3, \dots, b(v)$, ordered primarily according to the increasing label of the component they belong to, and secondarily according to their counter-clockwise order around $v$ starting from the $first_1$. Then we append the edges $first_i$ with $i=3, \dots, b(v)$ to $\mathcal{S}$ in decreasing order of their component labels.
The label $\ell(e)$ of an edge $e$ is its position in $\mathcal{S}$, starting from $0$.

\begin{figure}[!tb]
    \centering
    \includegraphics[page=15, width=0.7\textwidth]{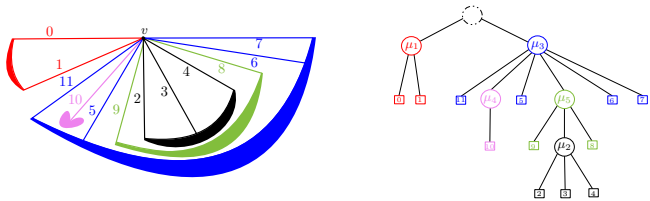}
    \caption{The planar embedding on a sphere around the cut-vertex $v$ and its resulting nesting-tree. The ranking will compute the numbers 6-1-8.}
    \label{fig:ranking-6-1-8}
\end{figure}

Next we compute an auxiliary ordered rooted tree $\Tau^B$ as follows. The nodes of $\Tau^B$ correspond either to biconnected components in $B_1, B_2, \dots, B_{b(v)}$ (\emph{component-nodes}) or to edges incident to $v$ (\emph{edge-nodes}).
We call $n(B_j)$ the node of $\Tau^B$ corresponding to $B_j$.
An edge-node can only be a leaf of $\Tau^B$.
Initially $\Tau^B$ contains only a dummy root $\rho$ and the \emph{current node} $\gamma$ of $\Tau^B$ is $\rho$.
We visit the adjacency list of $v$ counter-clockwise starting from $first_1$. 
When visiting an edge $e$ we distinguish four cases:
    \begin{enumerate*}[label=(\roman*)]
        \item If $e$ is both the first and the last edge of a component $B_j$ we add to $\Tau^B$ a component-node $n(B_j)$ as the last child of $\gamma$ and an edge-node corresponding to $e$ as child of $n(B_j)$ and set $\gamma$ as the parent of $\gamma$.
        \item If $e$ is the first, but it is not the last, edge of a component $B_j$ we add to $\Tau^B$ a component-node $n(B_j)$ as the last child of $\gamma$ and an edge-node corresponding to $e$ as the first child of $n(B_j)$. We also set node $\gamma = n(B_j)$.
        \item If $e$ is the last, but it is not the first, edge of a component $B_j$, we add to $\Tau^B$ an edge-node corresponding to $e$ as the last child of $\gamma$ and set $\gamma$ as the parent of $\gamma$.
        \item If $e$ is neither the first nor last edge of a component, we add to $\Tau^B$ an edge-node corresponding to $e$ as the last child of $\gamma$.
    \end{enumerate*}
In addition, after the computation of $\Tau^B$, let $\pi_2$ be the path between the root $\rho$ and $n(B_2)$, we save in all nodes in $\pi_2$ a Boolean value that indicates that those nodes belong to $\pi_2$. 
We then compute a pointer $jump_j$ for each node that belongs to $\pi_2$. First, we traverse the path $\pi_2$ top-down starting from the child of $\rho$ and keeping the index $i$ of the component with maximum index visited. For each node $n(B_j)$ we set $jump_j$ as the edge corresponding to the first edge-node that is a child of $n(B_j)$ on the right of $\pi_2$. Also, we set the value $max_j=i$ after having updated the value $i=j$ if $i<j$. Then we perform a post-order traversal of $\Tau^B$ starting from the child of $\rho$ that belongs to $\pi_2$ and stopping at the parent of $n(B_2)$, looking at the children of a node in reverse order (right to left) and considering only the ones that are not on the right of $\pi_2$. During the visit, if a node $B_j$ on $\pi_2$ has as left sibling a node $B_i$ with $i>j$ we set $jump_i=jump_j$, we say that $B_j$ does not belong to $\pi_2$ and $B_i$ belongs to $\pi_2$. Otherwise, if a node $B_j$ on $\pi_2$ has a parent $B_i$ (different from $\rho$) and $max_j>j$ (i.e. there exists an ancestor of $n(B_j)$ with an index greater than $j$) we set $jump_i=jump_j$ and say that $B_j$ does not belong to $\pi_2$.

Now we can exploit $\Tau^B$ to efficiently compute the values $d_1, \dots, d_{b(v)-2}$.
For $j= 1, \dots, b(v)-2$ consider node $n(B_{j+2})$ in $\Tau^B$.
Two are the cases:
Case 1: $n(B_{j+2})$ does not belong to $\pi_2$.
Let $\beta$ be the left sibling of $n(B_{j+2})$. If $\beta$ is an edge-node, then let $e$ be the edge corresponding to $\beta$, otherwise ($\beta$ is a component-node labeled $B_i$) let $e$ be $last_i$. We set $d_j=\ell(e)$. 
Case 2: $n(B_{j+2})$ belongs to $\pi_2$.
We use the pre-computed pointers by setting $d_j=\ell(jump_{j+2})$. 
Examples can be found in \cref{fig:ranking-6-1-8,fig:ranking-10-9-8,fig:ranking-0-0-0}
After setting $d_j$, we update $\mathcal{S}$ as follows. Let $\beta$ be the left sibling of $n(B_{j+2})$. If $\beta$ is an edge-node, then let $e$ be the edge corresponding to $\beta$, otherwise ($\beta$ is a component-node $B_i$) let $e$ be $last_i$. We replace $e$ with $first_{j+2}$ in $\mathcal{S}$ and set $\ell(first_{j+2}) = \ell(e)$.

This algorithm takes $O(n)$ time as the notion of which component an edge belongs to does not change during the computation, and hence a union-find data structure is not required. Also the computation of $first_i$, $\Tau^B$, and $jump_i$, for each $i$, involves a linear number of constant operations.

\section{Non-Connected Graphs}\label{se:non-connected}

Let $G$ be a planar graph with $c$ connected components ($c \geq 1$). We sort the connected components of $G$ in increasing order, based on the identifier of their vertex with the smallest identifier.
The \emph{identifier of a component} is its position in the sorted order. Hence we can call the components as $G_1, \dots, G_c$.

Let $\cal E$ be an embedding of $G$ and let ${\cal E}_1, \dots, {\cal E}_c$ be the embeddings of $G_1, \dots, G_c$, respectively, induced by $\cal E$. 
Consider the embedding ${\cal E}_i$ of $G_i$. The \emph{label of a face} $f$ of ${\cal E}_i$ is a triple $\langle i, (u,v), b \rangle$, where $(u,v)$ is the edge of $f$ with the smallest identifier and $b \in \{0, 1\}$ is $0$ if traversing $(u,v)$ from $u$ to $v$ face $f$ is to the right of $(u,v)$ and $1$ otherwise. Observe that in some cases, when traversing $(u,v)$ from $u$ to $v$, face $f$ is both to the right and to the left of $(u,v)$. In this case $f$ is the only face containing $(u,v)$ and hence the value of $b$ does not play any identification role. We denote by $F_i$ the number of faces of ${\cal E}_i$. We can sort the faces of ${\cal E}_i$ according to their increasing label and the \emph{identifier of a face} is the position of such a face in the order. Hence, we call them $f^i_1, \dots, f^i_{F_i}$.
We select as \emph{reference face} $f$ of $\cal E$ the face that contains the circular list $f^1_1$ of ${\cal E}_1$.  

We associate with ${\cal E}$ a \emph{nesting tree} $\Tau^N_{\cal E}$, which is a rooted unordered tree labeled on the edges, and a \emph{face tuple} $\langle o_1, \dots, o_c \rangle$, where $o_i\in [1 \dots F_i-1]$ as follows.
The nodes of $\Tau^N_{\cal E}$ are the components of $G$ plus a dummy node $\rho$ which is the root of $\Tau^N_{\cal E}$ and corresponds to the reference face $f$ of $\cal E$.
To build $\Tau^N_{\cal E}$ we project $\cal E$ on the plane in such a way that $f$ is the outer face. Note that this projection determines an \emph{outer face} for each ${\cal E}_i$, which is the face of ${\cal E}_i$ that would be unbounded when all other components were removed. For each $i$, we set element $o_i$ of the face tuple as the identifier of the outer face of ${\cal E}_i$. Hence, $o_i$ is a number in $[0 \dots F_i-1]$.
All the other faces of ${\cal E}_i$ are called \emph{inner faces}.
We consider the set of all the inner faces (each with its label) of ${\cal E}_1, \dots, {\cal E}_c$. This set contains $\sum_{i=1}^c (F_i-1)$ elements. We sort the set according to the increasing order of the labels. The sorting process uses the identifier of the component as the primary element, ensuring that the inner faces of a particular component are arranged consecutively in the ordering. Let $\pi(g)$ be the position of an inner face $g$ in this ordering (starting from $1$).
There is an edge in $\Tau^N_{\cal E}$ between $\rho$ and a component $G_i$ if a face $f_i$ of ${\cal E}_i$ is a circular list of $f$. The label of edge $(\rho, G_i)$ is set to zero.
$G_i$ is a parent of $G_j$ if there exists a face of $\cal E$ that contains the circular lists of both the outer face of ${\cal E}_j$ and an inner face $g$ of ${\cal E}_i$.
The label of edge $(G_i,G_j)$ is set to $\pi(g)$.
Generalizing, given a planar graph $G$ with $c$ connected components $G_1, \dots, G_c$ ($c \geq 1$) with embeddings ${\cal E}_1, \dots, {\cal E}_c$, respectively, we can define a nesting tree $\Tau^N_G$ and a face tuple $\langle o_1, \dots, o_c \rangle$ as follows.
The face tuple is any tuple such that $o_i\in [0 \dots F_i-1]$.
$\Tau^N_G$ is any $c+1$-nodes unordered tree with root $\rho$ and such that the remaining nodes are the components of $G$.

Consider the set of integers $[1 \dots \sum_{i=1}^c (F_i-1)]$. We partition such a set into consecutive intervals $I_1 = [1 \dots \sum_{i=1}^1 (F_i-1)], I_2 = [1 + \sum_{i=1}^1 (F_i-1) \dots \sum_{i=1}^2 (F_i-1)], \dots, I_c = [1 + \sum_{i=1}^{c-1} (F_i-1) \dots \sum_{i=1}^c (F_i-1)]$. 
The edges incident to $\rho$ are labeled $0$. The remaining edges of $\Tau^N_G$ are labeled in such a way that all the edges between a parent $G_h$ ($h=0,\dots, c$) and its children have a label in $I_h$.

As we have seen before, given an embedding ${\cal E}$ of $G$, it is possible to compute the corresponding nesting tree and face tuple. Also, this can be done in $O(n)$ time using standard data structures for the embeddings.

Conversely, given a set of embeddings ${\cal E}_1, \dots, {\cal E}_c$, a nesting tree, and a face tuple, built with the above requirements, it is easy to construct a corresponding embedding ${\cal E}$ of $G$. This is done in linear time by constructing an embedding on the plane as follows.
\begin{inparaenum}[(1)]
\item The numbers specified in the face tuple are used to select an outer face for the embedding of each component.
\item The components that are children of $\rho$ are embedded in the outer face.
\item The parent-child edges of the nesting tree are used to insert the embedding of the child component into an inner face of the parent component.
\item The face of the parent component in which to insert the child component is obtained from the label of the edge.
\end{inparaenum}

We summarize the above discussion with the following lemma.

\begin{lemma}\label{le:embedding-to-tree}
Let $G$ be a planar graph and let $G_1, \dots, G_c$ be the connected components of $G$ equipped with embeddings ${\cal E}_1, \dots, {\cal E}_c$. Let $F_i$ be the number of faces of ${\cal E}_i$. There exists a bijective function $\digamma$ whose domain is the set of embeddings of $G$ and whose codomain is the set of pairs composed of a face tuple and a nesting tree. Both the function $\digamma$ and its inverse can be computed in $O(n)$ time.
\end{lemma}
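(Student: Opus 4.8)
The goal is to establish that $\digamma$ is a well-defined bijection whose two directions are each computable in linear time. The plan is to prove bijectivity by exhibiting explicit maps in both directions and showing they are mutual inverses, leveraging the constructive descriptions already given in the text above the lemma.

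First I would verify that the forward map $\digamma$ (embedding $\to$ (face tuple, nesting tree)) is well-defined. The key point is that every piece of output data is canonically determined by $\cal E$: the component identifiers are fixed by sorting on smallest vertex identifier; the face identifiers of each ${\cal E}_i$ are fixed by the deterministic face-labeling via the triple $\langle i,(u,v),b\rangle$; the reference face $f$ is the unique face containing the circular list $f^1_1$; and, crucially, once we project so that $f$ is the outer face, the containment relation among components induces a genuine rooted tree. I would argue the parent relation ``$G_i$ is a parent of $G_j$ if some face of $\cal E$ contains both the outer face of ${\cal E}_j$ and an inner face $g$ of ${\cal E}_i$'' is single-valued and acyclic: each component $G_j$ sits in exactly one face of the arrangement of the remaining components (its ``host''), giving a unique parent, and nesting depth strictly decreases toward $\rho$, so no cycles arise. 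The edge label $\pi(g)$ is well-defined because the host inner face $g$ is unique, and it lands in the correct interval $I_h$ by the consecutive-interval construction of $\pi$.

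Next I would verify the inverse map (the reverse construction sketched in steps (1)--(4)) is well-defined and produces a valid embedding of $G$: the face tuple selects an outer face per component, children of $\rho$ go into the reference outer face, and each parent--child edge label $\pi(g)\in I_h$ decodes uniquely to an inner face $g$ of the parent $G_h$ into which the child's chosen outer face is nested. Because the labels of $\rho$-incident edges are $0$ and all other edges respecting the interval partition decode unambiguously to a host face, the placement of every component is fully determined, yielding a single embedding $\cal E$. I would then close the argument by checking the two round trips: starting from $\cal E$, computing $(\text{face tuple},\Tau^N_{\cal E})$ and reconstructing recovers $\cal E$ (the outer faces, nesting host faces, and $\rho$-children all match by construction); and starting from an admissible pair, applying $\digamma$ to the reconstructed embedding returns the same pair (the canonical labels are reproduced exactly). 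This establishes that $\digamma$ and its inverse are mutually inverse bijections.

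Finally, for the complexity claims I would rely on standard doubly-connected-edge-list style data structures for embeddings. The forward direction requires: sorting components and faces (each by keys of $O(\log n)$-bit integers, done in linear time via the identifier scheme and bucketing), computing the reference face, and a single traversal of the planar structure to determine, for each component, its host face and the inner-face containment giving the parent and label $\pi(g)$. The reverse direction is a linear sweep that inserts each component into the designated face of its parent. The main obstacle I anticipate is the containment/host-face computation: arguing rigorously that the parent and hosting inner face of each component can be identified in total $O(n)$ time across all components --- i.e., that the nesting structure can be extracted with one pass rather than pairwise face-containment tests --- and that the resulting parent relation indeed forms a tree rooted at $\rho$. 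I would handle this by processing the projected planar embedding so that walking each face boundary once suffices to detect which component's outer boundary lies immediately inside which inner face, amortizing the work over the $O(n)$ total face-boundary length guaranteed by planarity (where $\sum_i F_i = O(n)$ by Euler's formula).
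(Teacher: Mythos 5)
Your proposal is correct and follows essentially the same route as the paper, whose own ``proof'' is exactly the constructive discussion preceding the lemma: canonical component and face identifiers, the reference face $f$, the nesting tree with edge labels $\pi(g)$ drawn from the consecutive intervals $I_h$, and the four-step reverse construction. You simply make explicit details the paper leaves implicit---single-valuedness and acyclicity of the parent relation, the round-trip verification, and the amortized $O(n)$ host-face computation via face-boundary traversal using $\sum_i F_i = O(n)$---so this is a faithful elaboration rather than a different approach.
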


As we have seen so far, an embedding is fully specified by a pair consisting of a nesting tree and of a face tuple. Also, any of such pairs univocally determines an embedding. Unfortunately, this fact solves our encoding problem only partially. Namely, a face tuple is already a very simple encoding of some features of an embedding. Conversely, a nesting tree is far from being simply encoded in a set of numbers.
Hence, in the reminder of this section we concentrate on how to efficiently encode a nesting tree. Namely we will show a bijection between nesting trees and tuples of $c-1$ numbers each chosen in the interval $[0 \dots \sum_{i=1}^c (F_i-1)]$.

Rooted unordered trees with $c+1$ nodes and with a common root $\rho$ are in one-to-one correspondence with tuples of $c$ elements each in the interval $[0 \dots c]$ via the Pr{\"u}fer encoding described in \cref{se:preliminaries}. Unfortunately, this encoding does not provide information about the internal face of the parent component that contains the embedding of its child component, thereby lacking the specification of the parent-child relationship. On the other hand, we have established an order for the internal faces of the embedding of all the components that allows to retrieve, given an element in the order, which is the component it belongs to.

From the above discussion, given a nesting tree $\Tau^N_G$, we use the following variation of the Pr{\"u}fer algorithm to determine a tuple of $c-1$ numbers. The algorithm has $c-1$ iterations. At each iteration $i$ it selects the leaf $\ell$ (i.e.\ a component) of $\Tau^N_G$ with the smallest identifier, inserts into the $i$-th element of the tuple the label of the edge incident to $\ell$ and deletes $\ell$ from $\Tau^N_G$. Since the labels are in the interval $[0 \dots \sum_{i=1}^c (F_i-1)]$, each element of the tuple belongs to such an interval. In addition, the tuple can be computed in $O(n)$ time by implementing the algorithm as described in \cite{c2000efficient,CAMINITI200797,wang2009optimal}.

Conversely, given a tuple $\tau$ of $c-1$ numbers each in the interval $[0 \dots \sum_{i=1}^c (F_i-1)]$, we construct a nesting tree $\Tau^N_G$ as follows. We perform a preprocessing step computing an additional tuple $\tau'$ with $c-1$ elements as follows. We scan $\tau$ and for $i=1,\dots, c-1$ we extract the $i$-th element $\tau_i$ of $\tau$.
If $\tau_i=0$, we insert $0$ in the $i$-th position of $\tau'$. Otherwise, suppose that $\tau_i \in I_h$.
We insert $h$ in the $i$-th position of $\tau'$. Furthermore, we associate with each component $G_i$ of $G$ the number $\delta_i$ of occurrences of $i$ in $\tau'$ plus one. Intuitively, in this way we precompute the number of nodes that will be adjacent to $G_i$ in $\Tau^N_G$. Also, we associate $\rho$ with the number $\delta_0$ of occurrences of $0$ in $\tau$ plus two. Finally, we insert into $\Tau^N_G$ a node for each component $G_i$ plus a node $\rho$.
After the preprocessing step, the algorithm proceeds with $c-1$ iterations. In each iteration $i$, we select the component $G_h$ with $ \delta_h = 1$ and the smallest $h$. We insert an edge in $\Tau^N_G$ from child $G_h$ to the parent component $G_k$ such that $k$ is the $i$-th element of $\tau'$, if $k=0$ the parent will be $\rho$. We label such an edge with the $i$-th element of $\tau$. We finally decrement $\delta_h$ and $\delta_k$. After the last iteration, there remains one component $G_l$ such that $\delta_l = 1$, while $\delta_0 = 2$. We insert an edge in $\Tau^N_G$ from child $G_l$ to the parent component $\rho$ with label $0$.

Observe that appending a $0$ to $\tau'$ results exactly in the Pr{\"u}fer encoding of a rooted tree with $c+1$ nodes having the sought adjacencies between components and $\rho$ as its root. This tree is exactly $\Tau^N_G$ without the edge labels.
Having a set of labels for the edges that is dependent only on the parent node allows us to specify a label instead of a parent node in the encoding. We summarize the above discussion in the following lemma.
    
\begin{lemma}\label{le:tree-to-code}
    Let $G$ be a planar graph and let $G_1, \dots, G_c$ be the connected components of $G$ equipped with embeddings ${\cal E}_1, \dots, {\cal E}_c$. Let $F_i$ be the number of faces of ${\cal E}_i$.
    There exists a bijective function $\psi$ whose domain is the set of nesting trees of $G$ and whose codomain are the tuples of length $c-1$ whose elements are values in the interval $[0 \dots \sum_{i=1}^c (F_i-1)]$. Both the function $\psi$ and its inverse can be computed in $O(n)$ time.
\end{lemma}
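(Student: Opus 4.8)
The plan is to reduce the claim to the classical Pr{\"u}fer bijection, letting the edge labels carry exactly the extra information needed to recover the parent--child relationship. First I would isolate the underlying tree structure: forgetting the edge labels, a nesting tree is a tree on the $c+1$ nodes $\{\rho, G_1, \dots, G_c\}$ with distinguished root $\rho$, and the classical Pr{\"u}fer construction already gives a bijection between such trees and sequences of length $c-1$ over $\{0, 1, \dots, c\}$. The observation stated just before the lemma---that appending a $0$ to the auxiliary tuple $\tau'$ yields precisely the classical Pr{\"u}fer sequence of $\Tau^N_G$---is the bridge I would formalize: it asserts that $\tau'$ records, for each deleted leaf, the identifier of its neighbour (with $0$ denoting $\rho$), which is exactly the parent in the rooted tree.

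Next I would argue that $\psi$ is well defined. At each of the $c-1$ iterations the algorithm deletes the component-leaf of smallest identifier, so I must check that such a leaf always exists and that $\rho$ is never deleted. Since the algorithm only ever selects component-nodes, and a tree on at least two nodes has at least two leaves, at most one of which is $\rho$, a component-leaf is always available; after $c-1$ deletions exactly one component and $\rho$ survive. Recording the label of the edge incident to the deleted leaf yields an element of $[0 \dots \sum_{i=1}^c (F_i-1)]$, because edges incident to $\rho$ are labeled $0$ while every other edge, having some parent $G_h$, carries a label in $I_h \subseteq [1 \dots \sum_{i=1}^c (F_i-1)]$.

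Then I would verify $\psi^{-1}$ and that the two maps are mutual inverses. The key point is that a single number $\tau_i$ simultaneously fixes the parent and the face: its interval membership determines the parent via $\tau_i \in I_h \Rightarrow$ parent $G_h$ (and $\rho$ when $\tau_i=0$), which is how $\tau'$ is recovered, while its exact value pins down the inner face of that parent. Hence $\tau'$ is a legitimate classical-style Pr{\"u}fer sequence for the topology, and the degree counters $\delta_i$ used in decoding are exactly the neighbour counts of the tree encoded by $\tau'$. By correctness of classical Pr{\"u}fer decoding, the sequence of components selected during decoding coincides with the sequence of leaves deleted during encoding; thus applying $\psi^{-1}$ to the output of $\psi$ rebuilds the same topology, and reattaching the stored labels $\tau_i$ restores the edge labels, so the two directions are inverse. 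For the range constraints I would note that every edge produced by $\psi^{-1}$ is labeled by some $\tau_i$, which lies in $I_h$ precisely when its parent is $G_h$, so the reconstructed tree respects the interval labeling and is a genuine nesting tree.

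Finally, for the running time I would invoke the linear-time Pr{\"u}fer encoding and decoding algorithms of \cite{c2000efficient,CAMINITI200797,wang2009optimal}, observing that all auxiliary work---computing interval membership for each $\tau_i$, the counters $\delta_i$, and reattaching labels---is a constant number of passes over tuples of length $O(c)$, and that $c \le n$ and $\sum_{i=1}^c F_i = O(n)$ by Euler's formula, giving $O(n)$ time overall. The main obstacle I expect is the bijection argument of the third step: one must show cleanly that folding ``which parent'' and ``which face'' into a single interval-partitioned number loses no information, and that the degree-driven leaf selection of decoding genuinely mirrors the smallest-identifier leaf deletion of encoding. This amounts to carefully invoking the standard Pr{\"u}fer inverse correspondence while tracking the label-to-parent map, rather than confronting any new combinatorial difficulty.
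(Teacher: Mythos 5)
Your proposal is correct and takes essentially the same route as the paper: the paper's own justification is precisely this Pr\"ufer variant in which the interval partition $I_1,\dots,I_c$ lets a single edge label encode both the parent component and the chosen inner face, with decoding recovering $\tau'$ by interval membership and reducing to classical rooted Pr\"ufer encoding/decoding via the cited linear-time algorithms. The details you supply (well-definedness of the component-leaf deletions, the degree counters $\delta_i$ matching the tree encoded by $\tau'$, and the $O(n)$ bound via $\sum_{i=1}^c F_i = O(n)$) merely make explicit what the paper leaves to the discussion preceding the lemma.
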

Now we are ready to prove the following theorem.

\begin{theorem}\label{th:non-connected}
Let $G$ be a planar graph with connected components $G_1, \dots, G_c$ ($c>1$). Let ${\cal E}_1, \dots, {\cal E}_c$ be planar embeddings for $G_1, \dots, G_c$ such that embedding ${\cal E}_i$ has $F_i$ faces.
There exists a bijective function $\varphi$ whose domain is the set of planar embeddings of $G$ such that each $G_i$ has embedding ${\cal E}_i$ and whose codomain is the set of tuples of natural numbers $\langle a_1, \dots, a_{c-1}, b_1, \dots, b_c \rangle$ where $a_i \in \{0, \dots,  \sum_{j=1}^{c}{(F_j-1)}\}$ and $b_i \in \{0, \dots, F_i-1\}$.
Both the function $\varphi$ and its inverse can be computed in $O(n)$ time.
\end{theorem}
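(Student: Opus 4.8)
The plan is to obtain $\varphi$ as the composition of the two bijections already established in this section, namely $\digamma$ from \cref{le:embedding-to-tree} and $\psi$ from \cref{le:tree-to-code}, transforming independently the two coordinates they expose.

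First I would recall that \cref{le:embedding-to-tree} gives a bijection $\digamma$ sending each embedding ${\cal E}$ of $G$ in which every $G_i$ carries the fixed embedding ${\cal E}_i$ to a pair consisting of a face tuple $\langle o_1, \dots, o_c\rangle$, with $o_i \in [0 \dots F_i-1]$, and a nesting tree $\Tau^N_{\cal E}$. The codomain of $\digamma$ is a \emph{product}: by the generalized definitions, a face tuple is any element of $\prod_{i=1}^c [0 \dots F_i-1]$ and, independently, a nesting tree is any admissible edge-labeled tree on $c+1$ nodes, the two choices being unconstrained by one another. Recognizing this product structure is the point on which the whole argument hinges, since it lets me transform the two coordinates separately.

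Next I would apply \cref{le:tree-to-code}, whose bijection $\psi$ maps nesting trees of $G$ to tuples $\langle a_1, \dots, a_{c-1}\rangle$ of length $c-1$ with each $a_i \in [0 \dots \sum_{j=1}^c (F_j-1)]$ (recall $c>1$, so this tuple is nonempty). I then set $\varphi({\cal E}) = \langle a_1, \dots, a_{c-1}, b_1, \dots, b_c\rangle$, where $\langle a_1, \dots, a_{c-1}\rangle = \psi(\Tau^N_{\cal E})$ and $\langle b_1, \dots, b_c\rangle = \langle o_1, \dots, o_c\rangle$; that is, $\varphi = (\psi \times \mathrm{id}) \circ \digamma$, acting as $\psi$ on the nesting-tree coordinate and as the identity on the face-tuple coordinate.

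Finally I would check bijectivity and the running times. Because $\digamma$ is a bijection onto the product and $\psi$ is a bijection on the tree factor while the identity is a bijection on the face-tuple factor, the composite $\varphi$ is a bijection onto $[0 \dots \sum_{j=1}^c (F_j-1)]^{\,c-1} \times \prod_{i=1}^c [0 \dots F_i-1]$, which is exactly the claimed codomain. Its inverse reads $\langle b_1, \dots, b_c\rangle$ off as the face tuple, applies $\psi^{-1}$ to $\langle a_1, \dots, a_{c-1}\rangle$ to recover the nesting tree, and then applies $\digamma^{-1}$. Since $\digamma$, $\digamma^{-1}$, $\psi$, and $\psi^{-1}$ all run in $O(n)$ time, so do $\varphi$ and $\varphi^{-1}$. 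The only genuinely delicate step is confirming the product structure of the codomain of $\digamma$; once that is granted, the theorem follows by routine composition.
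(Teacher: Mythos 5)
Your proof is correct and follows exactly the paper's argument: the paper's own proof is the one-line observation that $\varphi$ is the combination of $\digamma$ from \cref{le:embedding-to-tree} and $\psi$ from \cref{le:tree-to-code}, which you spell out as $(\psi \times \mathrm{id}) \circ \digamma$. Your explicit verification of the product structure of $\digamma$'s codomain and of the inverse's composition is a sound elaboration of what the paper leaves implicit.
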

\begin{proof}
    The function $\varphi$ is the combination of the function $\digamma$ from \cref{le:embedding-to-tree} and the function $\psi$ from \cref{le:tree-to-code}.
\end{proof}

\section{Conclusions and Future Work}\label{se:conclusions}

In this paper, we addressed the problem of ranking and unranking all planar embeddings of an $n$-vertex graph $G$, which may not be connected. In particular, we produced a ranking function $\Phi$ that can be computed in $O(n)$ time, and its inverse unranking function $\Phi^{-1}$ that can be computed in $O(n\alpha(n))$ time.
In addition, we showed that the natural number associated with a planar embedding can be decomposed into a sequence of a linear number of values, each associated with a specific feature of the embedding. This property has practical implications, allowing us to generate embeddings uniformly at random by independently generating each value in the sequence. It also facilitates the counting, enumeration, and generation uniformly at random of constrained embeddings.

As future work we point out the problem of devising a linear time algorithm for the unranking of simply connected graphs, whose current $O(n \alpha(n))$-time bound slightly impacts the unranking of the whole graph.
Further, we would like to extend the results of this paper to beyond planar embeddings of graphs. 
Analogously to the topological morphing between two embeddings of a biconnected graph~\cite{DBLP:journals/tcs/AngeliniCBP13}, the concepts introduced in this paper could lead to topological morphing algorithms for non-connected and simply connected graphs.


\bibliographystyle{splncs04}
\bibliography{final}

\clearpage
\appendix

\section{A full example of ranking and unranking}

\begin{figure}
    \centering
    \includegraphics[page=18, width=\textwidth]{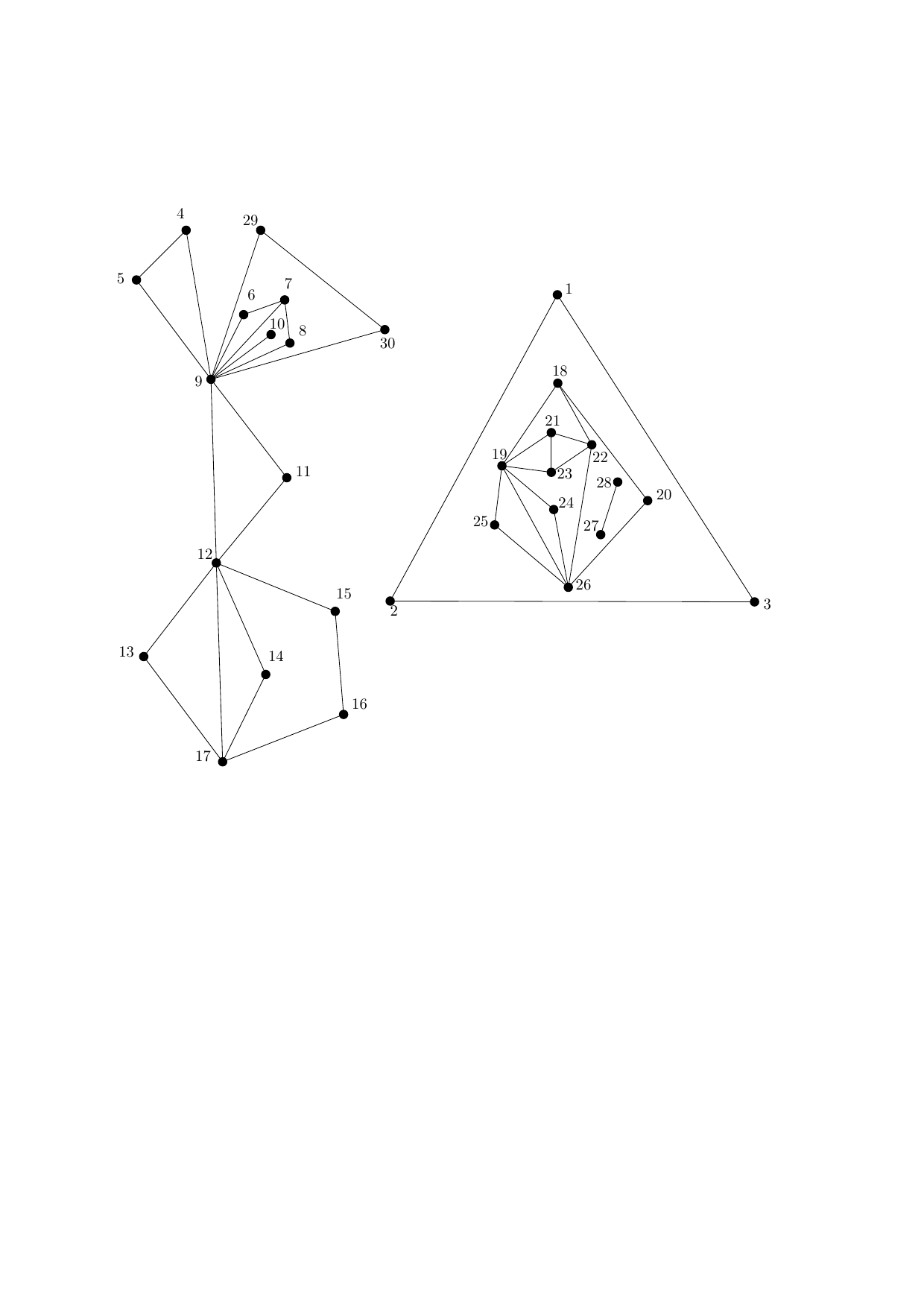}
    \caption{A graph drawn according to its $754.705.812.645^{th}$ planar embedding on the sphere, corresponding to the sequence $\langle 0,11,1,0,1,0,0,0,1,0,1,1,0,1, 2,1,6,1,4,5,0,1 \rangle$, as defined in \cref{th:general}. In particular, the bound for each number, as described in \cref{le:tuple-number}, is $\langle 17,17,17,2,9,8,1,2,3,1,2,2,2,4,9,8,7,2,6,6,2,2 \rangle$ and the number of embeddings is 19.716.667.342.848. Also, the rank restricted to the red (resp. green) biconnected subgraph is 21 (resp. 4), and the rank of the embedding around the cut-vertex 9 (in blue) is 3653.}
    \label{fig:example}
\end{figure}

\section{Graph Decomposition}

\subsection{Block-cutvertex Tree}\label{sse:block-cutvertex-tree}

Let $G$ be a connected graph and let $B_1, B_2, \dots, B_x$ be its biconnected components or blocks. The popular data structure known as the \emph{block-cutvertex tree} $\mathcal{T}^B$ of $G$, describes the decomposition of $G$ into its blocks~\cite{DBLP:journals/cacm/HopcroftT73}. Namely, $\mathcal{T}^B$ has a node $v_{B_i}$ for each block $B_i$ of $G$ and a node $v_{c_j}$ for each cutvertex $c_j$ of $G$: $v_{B_i}$ is called a \emph{block-node} of $\mathcal{T}^B$ and $v_{c_j}$ is a \emph{cutvertex-node} of $\mathcal{T}^B$. There is an arc $(v_{B_i},v_{c_j})$ in $\mathcal{T}^B$ if $c_j$ belongs to $B_i$ in $G$. 

Let $G$ be a connected graph with $n$ vertices and $m$ edges. Since each edge of $G$ uniquely belongs to a biconnected component of $G$, the overall size of the biconnected components of $G$ is $O(n+m)$, which is $O(n)$ for a planar graph. Further, since the biconnected components of $G$ can be computed in $O(n+m)$ time~\cite{DBLP:journals/cacm/HopcroftT73}, the block-cutvertex tree $\mathcal{T}^B$ of $G$ can be constructed in $O(n+m)$ time, which is $O(n)$ if $G$ is planar. 

\subsection{SPQR-tree}\label{sse:spqr-trees}

The SPQR-tree data structure, introduced by Di Battista and
Tamassia~\cite{DBLP:journals/siamcomp/BattistaT96,DBLP:journals/algorithmica/BattistaT96}, can be used to implicitly represent all planar embeddings of a biconnected planar graph $G$.  The SPQR-tree  $\mathcal{T}^T$ of $G$ represents a decomposition of $G$ into triconnected components along its split pairs.  
Each node $\mu$ of $\mathcal{T}^T$ is associated with a graph, called \emph{skeleton of $\mu$}, and denoted $skel(\mu)$. The edges of $skel(\mu)$ are either edges of $G$, which we call \emph{real edges}, or newly introduced edges, called \emph{virtual edges}.
The tree $\mathcal{T}^T$ is initialized to a single node $\mu$, whose skeleton, composed only of real edges, is~$G$. 
Consider a split pair $\{u, v\}$ of the skeleton of some node $\mu$ of $\mathcal{T}^T$,
and let $G_1,\dots,G_k$ be the components of $G$ with respect to $\{u,v\}$ such that $G_1$ is not a virtual edge and, if $k=2$, also $G_2$ is not a virtual edge.  
We introduce a node $\nu$ adjacent to $\mu$ whose skeleton is the graph $G_1 + e_{\nu,\mu}$, where $e_{\nu,\mu} = (u,v)$ is a virtual edge, and replace $skel(\mu)$ with the graph $\bigcup_{i \neq 1} G_i + e_{\mu,\nu}$, where $e_{\mu,\nu} = (u,v)$ is a virtual edge.  We say that $e_{\nu,\mu}$ is the \emph{twin virtual edge of} $e_{\mu,\nu}$, and vice versa.
Iteratively applying the above described replacement produces a tree with more nodes but smaller skeletons associated with the nodes. Eventually, when no further replacement is possible, the skeletons of the nodes of $\mathcal{T}^T$ are of four types: parallels of at least three virtual edges ($P$-nodes), parallels of exactly one virtual edge and one real edge ($Q$-nodes), cycles of exactly three virtual edges ($S$-nodes), and triconnected planar graphs ($R$-nodes). The \emph{merge} of two adjacent nodes $\mu$ and $\nu$ in $\mathcal{T}^T$, replaces $\mu$ and $\nu$ in $\mathcal{T}^T$ with a new node $\tau$ that is adjacent to all the neighbors of $\mu$ and $\nu$, and whose skeleton is $skel(\mu) \cup skel(\nu) \setminus \{ e_{\mu,\nu}, e_{\nu,\mu}\})$, where the end-vertices of $e_{\mu,\nu}$ and $e_{\nu,\mu}$ that correspond to the same vertices of $G$ are identified. By iteratively merging adjacent $S$-nodes, we eventually obtain the (unique) SPQR-tree data structure introduced by Di Battista and Tamassia~\cite{DBLP:journals/algorithmica/BattistaT96,DBLP:journals/siamcomp/BattistaT96}, where the skeleton of an $S$-node is a cycle. The crucial property of this decomposition is that a planar embedding of $G$ uniquely induces a planar embedding of the skeletons of its nodes and that, arbitrary and independently, choosing planar embeddings for all the skeletons uniquely determines an embedding of $G$. Observe that the skeletons of $S$- and $Q$-nodes have a unique planar embedding, that the skeleton of $R$-nodes have two planar embeddings (which are one the mirror of the other), and that $P$-nodes have as many planar embedding as the circular permutations of their virtual edges.
Consider a node $\mu$ and a virtual edge $e_{\nu,\mu}$ in $skel(\mu)$. Let $T_{\nu,\mu}$ be the subtree of $T$ obtained by removing the arc $(\nu,\mu)$ from $T$ and contains $\nu$.
Suppose that $\Tau^T$ is rooted at a node $\rho$. The \emph{pertinent graph} of a node $\nu$, with parent $\mu$, is the subgraph of $G$ defined as follows. First, a node $\nu'$ is obtained by iteratively merging all the nodes in the subtree of $\Tau^T$ rooted at $\nu$. Then the pertinent graph of $\nu$  corresponds to $skel(\nu')$ after the removal of the virtual~edge~$e_{\mu,\nu}$.

If $G$ has $n$ vertices, then $\mathcal{T}^T$ has $O(n)$ nodes and the total number of virtual edges in the skeletons of the nodes of $\mathcal{T}^T$ is in $O(n)$.  Also, $\mathcal{T}^T$ can be constructed in $O(n)$ time~\cite{DBLP:conf/gd/GutwengerM00}.

\section{ Pr{\"u}fer sequence}\label{se-app:prufer}
Here we describe the algorithms by Pr{\"u}fer to rank and unrank an unordered unrooted labeled tree\cite{prufer1918neuer}. Consider an alphabet $A$ made of integers in $[1 \dots n]$. 
Let $T$ be an $n$-nodes tree with nodes labeled by distinct elements of $A$ and let $\tau$ be a tuple initialized to $\emptyset$. The ranking algorithm works by iteratively deleting the leaf of $T$ with the smallest label and by appending the label of the adjacent node to $\tau$, until there are only two nodes left. For the unranking algorithm, a preliminary phase is to compute the degree of each node as the number of times its label appears in the tuple $\tau$ augmented by 1. Then for $1 \leq i \leq n-2$ find the node $v$ that is a leaf (it has degree 1) and has the minimum label and the node $u$ with label $\tau[i]$. Add an edge between $v$ and $u$ and decrease the degree of both. After all elements of the sequence have been processed, there are still two vertices with degree 1 (the others will have degree 0), simply adding an edge between those two nodes results in a unique tree. In \cite{CAMINITI200797} the authors show how adding another label at the end of the Pr{\"u}fer sequence results in rooting the tree at the specified node. In \cite{c2000efficient,CAMINITI200797,wang2009optimal} it is shown that the ranking and unranking of rooted labeled trees via the Pr{\"u}fer sequences can be done in linear time. See \cref{fig:tree-to-prufer} and \cref{fig:prufer-to-tree} for an example of ranking and unranking of a rooted tree.

\begin{figure}[tb!]
    \captionsetup[subfigure]{justification=centering}
    \centering
    \begin{subfigure}{0.15\textwidth}
    \centering
    \includegraphics[page=1, width=\textwidth]{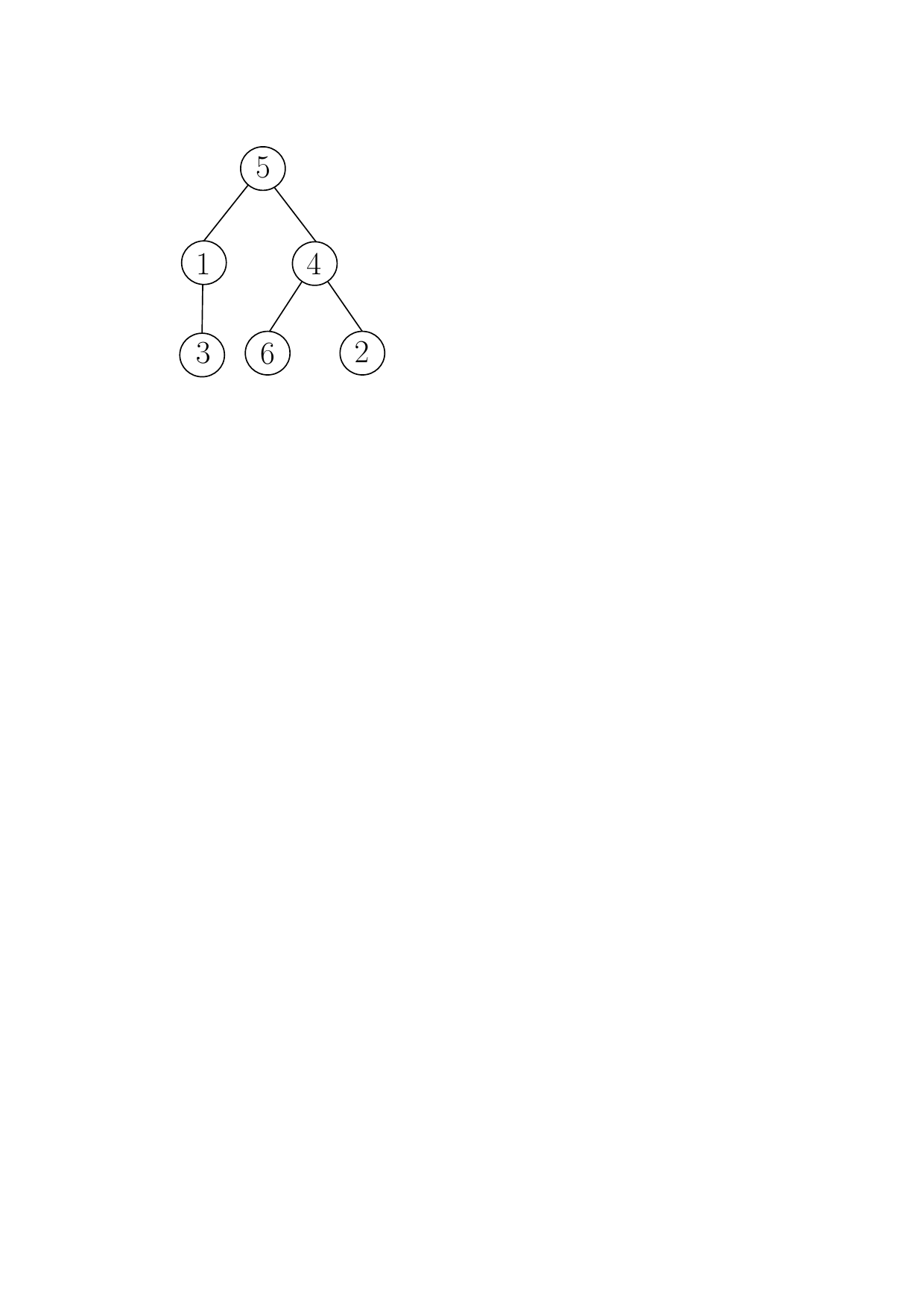}
    \subcaption{$\varnothing$}
    \end{subfigure}
    \hfil
    \begin{subfigure}{0.15\textwidth}
    \centering
    \includegraphics[page=2, width=\textwidth]{figures/prufer-example.pdf}
    \subcaption{$4$}
    \end{subfigure}
    \hfil
    \begin{subfigure}{0.15\textwidth}
    \centering
    \includegraphics[page=3, width=\textwidth]{figures/prufer-example.pdf}
    \subcaption{$4, 1$}
    \end{subfigure}
    \hfil
    \vspace{1pc}
    \begin{subfigure}{0.15\textwidth}
    \centering
    \includegraphics[page=4, width=\textwidth]{figures/prufer-example.pdf}
    \subcaption{$4, 1, 5$}
    \end{subfigure}
    \hfil
    \begin{subfigure}{0.15\textwidth}
    \centering
    \includegraphics[page=5, width=\textwidth]{figures/prufer-example.pdf}
    \subcaption{$4, 1, 5, 4$}
    \end{subfigure}
    \hfil
    \begin{subfigure}{0.15\textwidth}
    \centering
    \includegraphics[page=6, width=\textwidth]{figures/prufer-example.pdf}
    \subcaption{$4, 1, 5, 4, 5$}
    \end{subfigure}
    \caption{The ranking of a rooted labeled tree using the Pr{\"u}fer sequence.}\label{fig:tree-to-prufer}
\end{figure}

\begin{figure}[tb!]
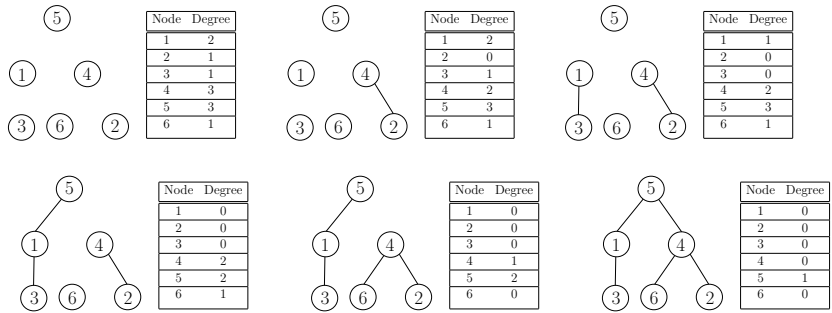

    \captionsetup[subfigure]{justification=centering}
    \centering
    \begin{subfigure}{0.25\textwidth}
    \centering
    \includegraphics[page=7, width=\textwidth]{figures/prufer-example.pdf}
    \end{subfigure}
    \hfil
    \begin{subfigure}{0.25\textwidth}
    \centering
    \includegraphics[page=8, width=\textwidth]{figures/prufer-example.pdf}
    \end{subfigure}
    \hfil
    \begin{subfigure}{0.25\textwidth}
    \centering
    \includegraphics[page=9, width=\textwidth]{figures/prufer-example.pdf}
    \end{subfigure}
    \hfil
    \vspace{1pc}
    \begin{subfigure}{0.25\textwidth}
    \centering
    \includegraphics[page=10, width=\textwidth]{figures/prufer-example.pdf}
    \end{subfigure}
    \hfil
    \begin{subfigure}{0.25\textwidth}
    \centering
    \includegraphics[page=11, width=\textwidth]{figures/prufer-example.pdf}
    \end{subfigure}
    \hfil
    \begin{subfigure}{0.25\textwidth}
    \centering
    \includegraphics[page=12, width=\textwidth]{figures/prufer-example.pdf}
    \end{subfigure}
    \caption{The unranking of a rooted labeled tree using the Pr{\"u}fer sequence $4,1,5,4,5$.} \label{fig:prufer-to-tree}
\end{figure}

\section{How to rank a tuple}\label{se-app:rank-tuple}
Here we give the proof of \cref{le:tuple-number}, stated in \cref{se:preliminaries}.

\leTuple*

\begin{proof}
Let $\tau$ be a tuple of length $n$ as defined in the statement of this lemma. Let $\tau_i$ denote the tuple composed by the first $i$ elements of $\tau$ ($i=1,\dots,n$).
We define the function $\psi$ recursively as follows. If $i=1$, we set $\psi(\tau_1)=\psi(\langle b_1 \rangle) = p_1 = b_1$. For $i>1$, we define $\psi(\tau_i) = p_i = \psi(\tau_{i-1}) \cdot B_i + b_i = p_{i-1} \cdot B_i + b_i$.

The function $\psi$ is surjective with respect to the codomain $[0 \dots \left( \prod_{i=1}^n B_i \right) -1]$. 
This property holds for $n=1$, where, by definition, the function can generate all the values in $[0 \dots B_{n}-1] \equiv [0 \dots \left( \prod_{i=1}^1 B_i \right) -1]$.
Let's assume that the property holds for $n \geq 1$. We will now prove that it also holds for $n+1$. Since $\psi$ is surjective for $n \geq 1$, we know that $p_n$ can assume all the values in $[0 \dots \left( \prod_{i=1}^n B_i \right) -1]$. By multiplying these values by $B_{i+1}$ and adding all possible values of $b_{i+1}$, we cover all the values in $[0 \dots \left( \prod_{i=1}^{n+1} B_i \right) -1]$.
The function $\psi$ is also injective. In fact, it is surjective and its domain and codomain have the same cardinality.

The inverse $\psi^{-1}$ of $\psi$ turns out to be as follows. If $i=1$, we have $\psi^{-1}(p_1)= \langle b_1 \rangle$. For $i>1$, we can calculate $b_i=p_i \bmod B_{i}$ and $p_{i-1}=\lfloor \frac{p_i}{B_{i}} \rfloor$. To obtain the tuple $\tau_i$, we concatenate $\psi^{-1}(p_{i-1})$ and $b_i$.
\end{proof}


\section{Biconnected Graphs}\label{se-app:biconnected-graphs}


Let $G$ be an $n$-vertex biconnected planar graph and let $\mathcal{T}^T$ be its SPQR-tree. Let $\mu$ be a node of $\mathcal{T}^T$ and consider its skeleton $skel(\mu)$ with poles $u$ and $v$. Let $d(\mu)$ be the distance of $\mu$ from the root of $\mathcal{T}^T$. Let $e(\mu)$ be the edge with the minimum identifier corresponding to a $Q$-node among those in the sub-tree of $\mathcal{T}^T$ rooted at $\mu$. Observe that if $\mu$ is a $Q$-node, then the sub-tree coincides with $\mu$, and that two different nodes $\mu_1$ and $\mu_2$ of $\mathcal{T}^T$ have different pairs $\langle d(\mu_1), e(\mu_1) \rangle$ and $\langle d(\mu_2), e(\mu_2) \rangle$. Hence, we assign the pair $\langle d(\mu), e(\mu) \rangle$ to $\mu$ as its \emph{identifier}.

If $\mu$ is an $R$-node, the \emph{first embedding} of $skel(\mu)$ is defined as follows. Let $u$ be the pole of $skel(\mu)$ with minimum identifier. Let $(u,w_1)$ be the edge incident to $u$ with minimum identifier and let $(u,w_2)$ be the edge adjacent to $(u,w_1)$ in the circular list of the edges incident to $u$ in the planar embedding of $skel(\mu)$ with minimum identifier. The first embedding of $skel(\mu)$ is such that $(u,w_1)$ precedes $(u,w_2)$ in the clockwise order around $u$. Since the skeleton of an R-node has exactly two planar embeddings, we call \emph{second embedding} the one that is not the first.
We order the $R$-nodes, and hence their skeletons, of $\mathcal{T}^T$ according to their increasing identifiers.

If $\mu$ is a $P$-node the planar embeddings of its skeleton are given by all the permutations of its virtual edges. The \emph{first embedding} of $skel(\mu)$ is defined as follows. We order the edges of $skel(\mu)$ different from its reference edge according to the identifiers of their corresponding nodes in $\mathcal{T}^T$. The clockwise order of the first embedding encounters first the reference edge and then all the other edges of the skeleton with the above order.
We order the $P$-nodes, and hence their skeletons, of $\mathcal{T}^T$ according to their increasing identifiers. 


\thBiconnected*

\begin{proof}
In Lemma 4.2 of \cite{DBLP:journals/siamcomp/BattistaT96} it is shown that the planar embeddings of $G$ are in one-to-one correspondence with all the possible embeddings of the skeletons of the R-nodes and of the P-nodes of~$\mathcal{T}^T$.
Consider the elements $r_1, \dots, r_z$ of the tuple in the statement. We assign value $r_\xi=0$ ($\xi=1,\dots,z$) if the embedding selected for $skel(\mu_\xi)$ is its first embedding and we assign value $r_\xi=1$ ($\xi=1,\dots,z$) if the embedding selected for $skel(\mu_\xi)$ is its second embedding. Hence, this part of the tuple specifies the embeddings of all the skeletons of the R-nodes of~$\mathcal{T}^T$.
Consider now the elements $p_1, \dots, p_y$ of the tuple in the statement. Each of them is associated with a P-node of~$\mathcal{T}^T$. We exploit a result \cite{DBLP:journals/ipl/MyrvoldR01} of Myrvold and Ruskey to put each natural number $p_\xi$ ($\xi=1,\dots,y$) in one-to-one correspondence, in linear time in $\delta(p_\xi)$, to a permutation of the edges of $skel(\nu_\xi)$, associating the value $0$ with its first embedding. Notice that the circular order of the edges of $skel(\nu_\xi)$ corresponds to a permutation of all its edges with the exception of the reference edge, thus we have $(\delta(\nu_\xi)-1)!$ possible permutations.

Given $G$, tree~$\mathcal{T}^T$ can be computed in $O(n)$ time \cite{DBLP:journals/siamcomp/BattistaT96,DBLP:conf/gd/GutwengerM00}.
Given a planar embedding $\cal E$ of $G$ the corresponding tuple $\langle p_1, \dots, p_y, r_1, \dots, r_z \rangle$ can be computed in $O(n\log n)$ time: (1) by assigning to each $r_\xi$ a value depending on the fact that the corresponding skeleton, related to $\cal E$, has its first or second embedding and (2) by assigning to each $p_\xi$ a natural number corresponding to its permutation in $\cal E$ using \cite{DBLP:journals/ipl/MyrvoldR01}. Observe that the number of edges in the skeletons of the P-nodes in $\mathcal{T}^T$ is linear and therefore the number stored in $p_\xi$ may need $O(n \log n)$ bits to be represented.
Conversely, given a tuple $\langle p_1, \dots, p_y, r_1, \dots, r_z \rangle$ one can build an embedding for $G$ by just selecting one of the two embeddings of the skeletons of R-nodes according to the values of the $r_\xi$ and by converting, for each $p_\xi$ the given natural number into an embedding of the skeleton of the corresponding P-node.
\end{proof}

We have focused on a specific biconnected component, but the same approach can be applied to all the biconnected components of a graph. Also, the embeddings of the R-nodes and P-nodes in the SPQR-trees of these components are independent of each other.

\section{Connected Graphs}\label{se-app:connected}
Here we give the proof of the theorems stated in \cref{se:simply-connected}
\thConnected*
\begin{proof}
First, from \cite{cai1993counting} we know that all the planar embeddings on the sphere of the $b(v)$ biconnected components around $v$ are $E_v = \prod_{j=1}^{b(v)} \delta_{v,j} \prod_{j=1}^{b(v)-2}(\delta_v-j)$, and this number corresponds to all the possible tuples $\langle c_{1}, \dots, c_{b(v)}, d_{1}, \dots, d_{b(v)-2} \rangle$. Having those two sets the same cardinality, there is a bijection between them. We need to prove that this bijection is $\varphi^{-1}_v$ as described in \cref{ssec:tuple2embedding}. To do so it is sufficient to show that $\varphi^{-1}_v$ is injective, as having the domain and codomain the same cardinality, this implies surjectivity. Then we need to demonstrate that $\varphi_v$, as described in \cref{ssec:embedding2tuple}, is the inverse function of $\varphi^{-1}_v$.
Finally we will discuss the time needed to compute the two functions.

Given two different tuples $\tau=\langle c_1,\dots,c_{b(v)},d_1,\dots,d_{b(v)-2}\rangle$ and $\tau'=\langle c_1',$ $\dots,c_{b(v)}',d_1',$ $\dots,d_{b(v)-2}'\rangle$, suppose that $\varphi_v^{-1}(\tau)=\mathcal{E}=\varphi_v^{-1}(\tau')=\mathcal{E'}$, i.e. they correspond to the same embedding.
Let $k\in [2\dots b(v)]$ be the first index such that $c_k\neq c_k'$. In that case, in a counter-clockwise traversal of the incidence list of $v$ starting from a random edge of $B_1$, the first edge of block $B_k$ encountered will be $c_k$ in $\mathcal{E}$ and $c_k$ in $\mathcal{E}'$, but as $c_k\neq c_k'$ this means that $\mathcal{E}\neq\mathcal{E}'$, a contradiction. In the case in which $k=1$ it is sufficient to start the traversal from a random edge of $B_2$.
Now, assuming that all the $c$ values are the same in $\tau$ and $\tau'$, let $k\in [1\dots b(v)-2]$ be the first index such that $d_k\neq d_k'$. We distinguish three cases depending on $d_k$ and $d_k'$ referring to edges that belong to $B_k$ or to some component that has attached to it.
Let $e_{d_k}\in B_i$ and $e_{d_k}'\in B_j$ such that $i,j\leq k$ and such that their are both considered to belong to $\mathcal{P}^\ssquare_k$. As $e_{d_k}$ (resp. $e_{d_k}'$) is the first edge of $B_k$ to be encountered after all the edges of $B_2$ in a counter-clockwise traversal of the incidence list of $v$ starting from a random edge of $B_1$ in $\mathcal{E}$ (resp. $\mathcal{E'}$), and $e_{d_k}\neq e_{d_k}'$, as they can not be first edges of some component, we have the contradiction $\mathcal{E}\neq\mathcal{E}'$.
Let now $e_{d_k}\in B_i$ with $i\leq k$ and $e_{d_k}$ considered to belong to $\mathcal{P}^\ssquare_k$, while $e_{d_k}'\in B_j$ with $j>k$ belongs to $\mathcal{P}^\ssquare_j$. In this case, $first_k$ is put right after some edge $e^*$ that precedes $first_2$ in $\mathcal{P}^\ssquare_1$, while it is put right after $e_{d_k}'$ in $\mathcal{P}^\ssquare_k$. As $e^*$ must belong to a component $B_h$ whose partial embedding has been merged with $\mathcal{P}^\ssquare_1$, we have that $h<i$ and consequently $e^*\neq e_{d_k}'$ and again $\mathcal{E}\neq\mathcal{E}'$.
Let finally $e_{d_k}\in \mathcal{P}^\ssquare_i$ and $e_{d_k}'\in  \mathcal{P}^\ssquare_j$ such that $i,j \neq k$. As $first_h$ for each $2<h<k$ has been substituted in $\mathcal{S}$ to the edges selected using $d_h$, $e_{d_k}\neq e_{d_k}'$ and as no other edges will be put between $first_k$ and $e_{d_k}$ or $e_{d_k}'$ we have the contradiction $\mathcal{E}\neq\mathcal{E}'$.
This concludes the demonstration that $\varphi^{-1}_v$ is injective and consequently it constitutes a bijection. 

We now prove that $\varphi_v$ is the inverse of $\varphi_v^{-1}$. Consider a tuple $\tau$ and an embedding $\mathcal{E}=\varphi_v^{-1}(\tau)$, we need to demonstrate that $\tau'=\varphi_v(\mathcal{E})=\tau$.
Each number $c_i\in \tau$, for $i=1,\dots,b(v)$, is used by $\varphi_v^{-1}$ to compute a planar embedding on the plane of each biconnected component incident to $v$, defining an edge $first_i$ for each of them. During the merge operations those embeddings are preserved, and after $b(V)-1$ merging operations the result is an embedding on the plane $\mathcal{P}^\ssquare_1$ that can be transformed again into the embedding on the sphere $\mathcal{E}$. In $\mathcal{E}$ all the edges $first_i$ for $i=2,\dots,b(v)$ can be computed by traversing counter-clockwise the edges incident to $v$ starting from $first_1$. To find $first_1$ we can exploit the fact that in the embedding on the plane $\mathcal{P}^\ssquare_1$ the embeddings $\mathcal{E}^\ssquare_1$ and $\mathcal{E}^\ssquare_2$ have been merged in such a way that their edges are not interleaved and $first_2$ comes after $last_1$. In fact, starting a counter-clockwise visit of the edges around $v$ from a random edge of $B_2$, the first edge of $B_1$ that will be found is $first_1$. As this procedure is used by $\varphi_v(\mathcal{E})$ to compute the elements $c_1',\dots, c_{b(v)}'$ of $\tau'$, we can conclude that $c_1'=c_1,\dots, c_{b(v)}'=c_{b(v)}$.

Consider now a partial embedding $\mathcal{P}^\ssquare_i$ during the computation of $\varphi_v^{-1}(\tau)$. If it falls into case 2, i.e. the edge $e_{d_i}$ is an edge that belongs to $\mathcal{P}^\ssquare_i$, it merges with $\mathcal{P}^\ssquare_1$ by inserting $first_i$ right before $first_2$ (after the edge $e^*$) and the edge $e_{d_i}$ right after $last_2$. If there is a partial embedding $\mathcal{P}^\ssquare_j$ with $j>i$ that also falls into case 2, this will be merged with $\mathcal{P}^\ssquare_1$ in the same way, resulting in being nested between $\mathcal{P}^\ssquare_i$ and $\mathcal{P}^\ssquare_1$ as it was before the merging with $\mathcal{P}^\ssquare_i$. Due to the change of the edge $e^*$ before $first_2$ with $first_i$ in $\mathcal{S}$ after the merge, no partial embedding $\mathcal{P}^\ssquare_k$ with $k>i$ could be put completely between $e^*$ and $first_i$. During the computation of $\varphi_v(\mathcal{E})$, consider the node $n(B_i)$ in the tree $\mathcal{T}^B$. It belongs to the subtree rooted at a child of $\rho$ that comprehends $n(B_2)$. Further it either belongs to the path $\pi_2$ (as originally defined), having part of its child on the left of the path and part of the child on the right of the path, either it is totally on the left of the path, but its right sibling is a node $n(B_j)$ with $j<i$ such that either $n(B_j)\in \pi_2$, either it has another right sibling $n(B_k)$ with $k<j$ with the same properties, eventually reaching a node $n(B_h)$ on $\pi_2$. Having that $h<i$, and being $first_h$ between $first_i$ and $first_2$, we have necessarily that $\mathcal{P}^\ssquare_h$ has been merged with $\mathcal{P}^\ssquare_i$, falling into case 1 during the computation of $\varphi_v^{-1}(\tau)$ and so for all the right siblings of $n(B_i)$ in $\mathcal{T}^B$ between $n(B_i)$ and $n(B_h)$. On the other side, this also implies that $d_i$ corresponds to an edge that is part of $\mathcal{E}^\ssquare_h$ and after the merges is in $\mathcal{P}^\ssquare_i$. This edge $e$ is the one corresponding to the first edge-node child of $n(B_h)$ on the right of $\pi_2$ in $\mathcal{T}^B$. Similarly, if $n(B_i)$ is on $\pi_2$, all the nodes in $\pi_2$ that are descendants of $n(B_i)$ and not descendants of a node $n(B_j)$, with $j>i$, that is also a descendant of $n(B_i)$ in $\pi_2$, correspond to partial embeddings that have been merged with $\mathcal{P}^\ssquare_i$, thus falling into case 1. In this case, $d_i$ pointed to the first edge $e$ on the right of $last_2$ of the deepest of those nodes in $\mathcal{T}^B$. Observe that during the computation of $\varphi_v(\mathcal{E})$ the label $jump_i$ corresponds to the original label of the edge $e$, as the labels of the edge-nodes have not been changed yet. Furthermore, the reverse post-order traversal used to update the label $jump_i$ follows exactly those rules, firstly considering the existence of a left sibling and then the existence of an ancestor with indexes greater than $i$. In fact, if both cases apply, the existence of the ancestor applies also to the left sibling in the following step of the traversal. Thus the numbers $d_1,\dots,d_{b(v)-2}$ such that during the computation of $\varphi_v^{-1}(\tau)$ fall into case 2 are the same as the values $d_1',\dots,d_{b(v)-2}'$ computed by the application of case 2 in $\varphi_v^(\mathcal{E})$.
If instead $\mathcal{P}^\ssquare_i$ falls into case 1 during the computation of $\varphi_v^{-1}(\tau)$, $d_i$ may correspond either to an edge that has not been used for merging operations or to an edge $first_j$ for some $j<i$ such that $first_j$ has substituted (possibly indirectly) the edge that was originally in position $d_i$ in $\mathcal{S}$ after a merging operation. When merging $\mathcal{P}^\ssquare_i$ with the partial embedding $\mathcal{P}^\ssquare_j$ to which the edge $e$ referenced by $d_i$ belongs, $\mathcal{P}^\ssquare_i$ is put completely between $e$ and the edge (possibly null) that follows $e$ in $\mathcal{P}^\ssquare_j$, and in particular is such that nothing else will be put between $e$ and $first_i$. Let $B_j$ be the block containing $e$, we have that, during the computation of $\varphi_v^(\mathcal{E})$, the node $n(B_i)$ is either child of $n(B_j)$ or a right sibling of $n(B_j)$ (in the case in which $e$ is $last_j$). If $e$ is not $first_j$, we have that its label $\ell(e)$ has not been changed and corresponds to the position of $e$ in $\mathcal{S}$ and this is exactly $d_i$ as the list $\mathcal{S}$ is computed in the same way in $\varphi_v$ and $\varphi_v^{-1}$, so $d_i'=d_i$. Otherwise, if $e$ is $first_j$, we may have that the label $\ell(e)$ has been modified by previous computation steps. If this has not been modified, i.e. $j<3 \lor j>i$, we have the previous situation and then $d_i'=d_i$. If $\ell(e)$ has been modified, there exists a component $B_k$ with $k<j$ such that $n(B_k)$ is either the parent or the right sibling of $n(B_j)$ and an edge $e_k$ that belongs to $B_k$ and comes right before $first_j$ in $\mathcal{E}$. Again if $e_k$ is $first_k$ we have the same situation, and eventually we reach a node-component $n(B_h)$ such that $h<3 \lor h>i$. The sequence of changes makes $\ell(e)=\ell(e_h)$. This corresponds to the sequence of merging that involved the same number $d_i$ in $\varphi_v^{-1}(\tau)$ more than one time. In fact, having embedded $\mathcal{E}^\ssquare_i$ right after $first_j$ with $j<i$ implies that the number $d_i$ was used also for the merging of $\mathcal{P}^\ssquare_j$ and thereafter $first_j$ was put in position $d_i$ in $\mathcal{S}$. Having changed the label $\ell(first_j)=\ell(e_k)=\ell(e_h)$ after the computation of $d_j$, we have that $d_i'=\ell(e_h)=d_i$.
This concludes the proof of $\varphi_v$ and $\varphi_v^{-1}$ being one the inverse function of the other.

Regarding the computation times, notice first that all the numbers involved have a linear size with respect to the number of edges in the graph, therefore we can avoid considerations on the space and time needed for their representation.
In $\varphi_v$ the computation of the values $c_i$ and $first_i$ takes linear time, as it can be performed by visiting twice the edges incident to $v$. Similarly, the labeling of the edges and successive creation of $\mathcal{S}$ can be performed by visiting twice all the edges incident to $v$ for each block. The creation of $\mathcal{T}^B$ also takes linear time, as we perform for each edge incident to $v$ an operation that takes constant time, assuming that an edge knows to which component it belongs and the component has references to its first and last edges. The computation of $\pi_2$ and the pointers $jump_i$ takes linear time, as they can be linearly many and they can be created by a top-down traversal of $\pi_2$. The reverse post-order traversal also takes linear time as the operations to be performed on each node take constant time. Finally, the computation of the values $d_i$ takes constant time for each of them, as the condition for deciding between case 1 and case 2 can be checked in constant time, and in both cases the operations performed are constant. As all the operations described above are performed in sequence, we have that computing $\varphi_v$ takes $O(n)$ time.

In $\varphi_v^{-1}$ the computation of the embeddings on the plane at the beginning of the algorithm requires $O(n)$ time, as we must order the edges incident to $v$ for each component by their identifiers and a component may comprehend linearly many edges. To this aim, we could use a bucket sort or similar linear time algorithms. The labeling that is computed thereafter so that in each component $B_i$ the edges are ordered starting from $first_i$ takes linear time, as we can traverse the edges incident to $v$ in $B_i$ starting from $first_i$. This allows us to compute also $last_i$ and $\mathcal{S}$. The merging operation can be performed in constant time, by just merging the incidence lists on $v$ in the two partial embeddings, according to what is described in the two cases. Also, the update on $\mathcal{S}$ can be performed in constant time. The only remaining part is the check of whether an edge belongs to a partial embedding or not. This can be done in total $O(n\alpha(n))$, where $\alpha$ is the inverse Ackermann function, using a union-find data structure as described in \cite{10.1145/62.2160}. Considering that all those operations are independently performed, we have a total cost of $O(n\alpha(n))$.
\end{proof}

Generalizing to multiple cut-vertices, we have the following.

\leConnected*
\begin{proof}
    We produce $\varphi$ by showing how to obtain the sequence $\omega_1, \omega_2, \dots, \omega_w$, where $\omega_i \in [0 \dots E_{v_i}-1]$, from an embedding on the sphere of $G$ and vice-versa.
    
    Let $\cal E$ be an embedding on the sphere of $G$ such that the restriction of $\cal E$ to each block $B_j$ is $\mathcal{E}_j$, $j=1, \dots, x$. Consider a cut-vertex $v_i$ of $G$ and remove from $\cal E$ the biconnected components that are not incident to $v_i$. Apply the function $\varphi_{v_i}$ to the obtained arrangement around $v_i$ of the biconnected components incident to $v_i$ and you have a number $\omega_i \in [0 \dots  E_{v_i}-1]$. Iterating for each $i=1, \dots, w$ you have a sequence of numbers $\omega_1, \omega_2, \dots, \omega_w$ where $\omega_i \in [0 \dots E_{v_i}-1]$. 
    
    Conversely, let $\omega_1, \omega_2, \dots, \omega_h$ be a sequence of numbers such that $\omega_j \in [0 \dots E_{v_j}-1]$ for $j=1, \dots, h$. We construct $\cal E$ incrementally. Initialize $\cal E$ by constructing with $\varphi_{v_1}^{-1}(\omega_1)$ the arrangement around $v_1$ of the embeddings of the biconnected components incident to $v_1$. Let $v_\ell$ be a cut vertex of $G$ belonging to some biconnected components of the current $\cal E$. Observe that $v_\ell$ belongs to a single biconnected component $B_p$ of the current $\cal E$, otherwise, if $v_\ell$ belonged to two biconnected components $B_p$ and $B_q$ of $\cal E$, there would be a cycle traversing $B_p$ and $B_q$. We use $\varphi_{v_\ell}^{-1}(\omega_\ell)$ to construct the arrangement of the biconnected components incident to $v_\ell$, where ${\cal E}_p$ is replaced by the current $\cal E$. The produced arrangement is used to update $\cal E$ and the procedure is iterated until there is some non-processed cut-vertex of $G$ in the current $\cal E$. 
\end{proof}

\section{Examples of ranking and unranking blocks around a cut-vertex}
\begin{figure}[h!]
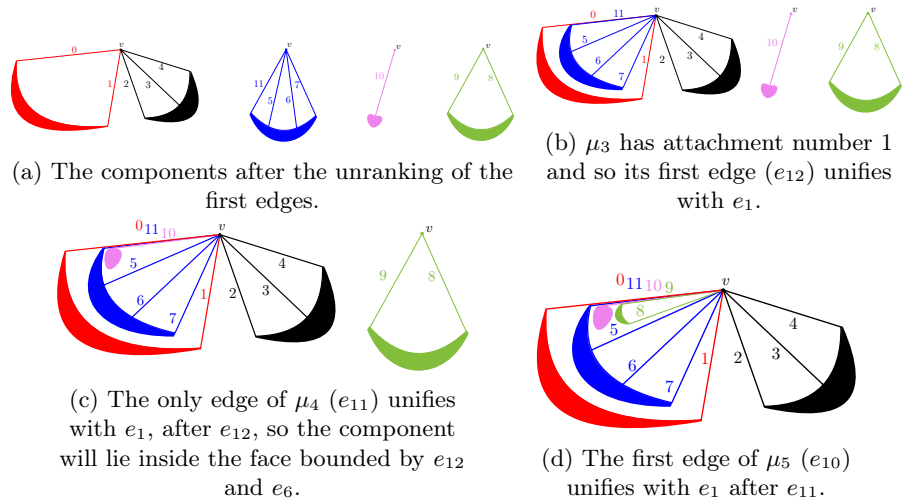

    \captionsetup[subfigure]{justification=centering}
    \centering
    \begin{subfigure}{0.55\textwidth}
    \centering
    \includegraphics[page=1, width=\textwidth]{figures/ranking-cut-vertex.pdf}
    \subcaption{The components after the unranking of the first edges.}
    \end{subfigure}
    \hfil
    \begin{subfigure}{0.4\textwidth}
    \centering
    \includegraphics[page=2, width=\textwidth]{figures/ranking-cut-vertex.pdf}
    \subcaption{$\mu_3$ has attachment number 1 and so its first edge ($e_{12}$) unifies with $e_1$.}
    \end{subfigure}
    \hfil
    \begin{subfigure}{0.45\textwidth}
    \centering
    \includegraphics[page=3, width=\textwidth]{figures/ranking-cut-vertex.pdf}
    \subcaption{The only edge of $\mu_4$ ($e_{11}$) unifies with $e_1$, after $e_{12}$, so the component will lie inside the face bounded by $e_{12}$ and $e_6$.}
    \end{subfigure}
    \hfil
    \begin{subfigure}{0.35\textwidth}
    \centering
    \includegraphics[page=4, width=\textwidth]{figures/ranking-cut-vertex.pdf}
    \subcaption{The first edge of $\mu_5$ ($e_{10}$) unifies with $e_1$ after $e_{11}$.}
    \end{subfigure}
    \caption{The unranking of the embedding 0-0-0.}\label{fig:unranking-0-0-0}
\end{figure}

\begin{figure}[h!]
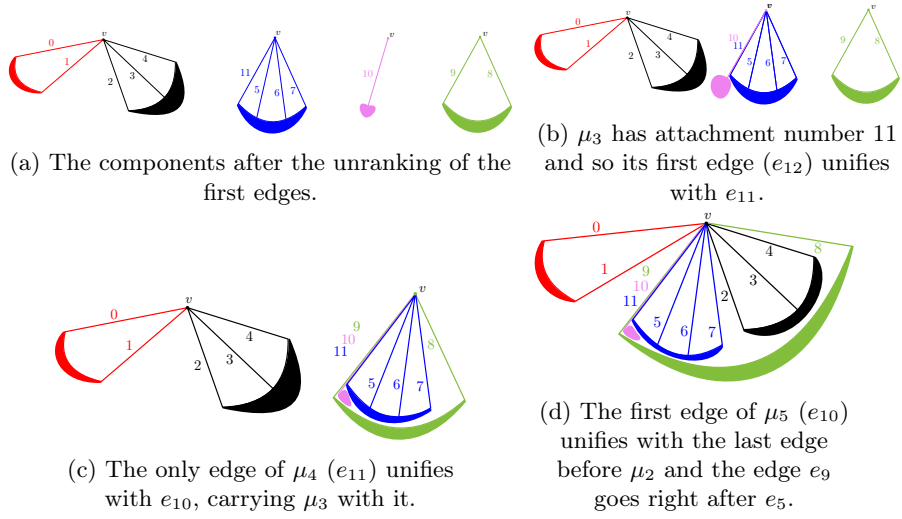

    \captionsetup[subfigure]{justification=centering}
    \centering
    \begin{subfigure}{0.55\textwidth}
    \centering
    \includegraphics[page=6, width=\textwidth]{figures/ranking-cut-vertex.pdf}
    \subcaption{The components after the unranking of the first edges.}
    \end{subfigure}
    \hfil
    \begin{subfigure}{0.4\textwidth}
    \centering
    \includegraphics[page=7, width=\textwidth]{figures/ranking-cut-vertex.pdf}
    \subcaption{$\mu_3$ has attachment number 11 and so its first edge ($e_{12}$) unifies with $e_{11}$.}
    \end{subfigure}
    \hfil
    \begin{subfigure}{0.45\textwidth}
    \centering
    \includegraphics[page=8, width=\textwidth]{figures/ranking-cut-vertex.pdf}
    \subcaption{The only edge of $\mu_4$ ($e_{11}$) unifies with $e_{10}$, carrying $\mu_3$ with it.}
    \end{subfigure}
    \hfil
    \begin{subfigure}{0.35\textwidth}
    \centering
    \includegraphics[page=9, width=\textwidth]{figures/ranking-cut-vertex.pdf}
    \subcaption{The first edge of $\mu_5$ ($e_{10}$) unifies with the last edge before $\mu_2$ and the edge $e_9$ goes right after $e_5$.}
    \end{subfigure}
    \caption{The unranking of the embedding 10-9-8.}\label{fig:unranking-10-9-8}
\end{figure}

\begin{figure}[!h]
    \centering
    \includegraphics[page=5, width=\textwidth]{figures/ranking-cut-vertex.pdf}
    \caption{The planar embedding on a sphere around the cut-vertex $v$ and its resulting nesting-tree. The ranking will compute the numbers 0-0-0.}
    \label{fig:ranking-0-0-0}
\end{figure}

\begin{figure}[!h]
    \centering
    \includegraphics[page=10, width=\textwidth]{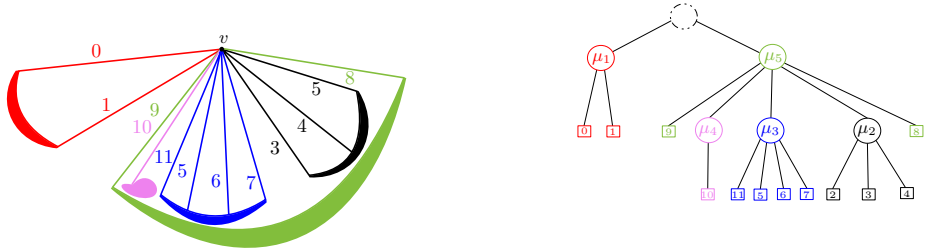}
    \caption{The planar embedding on a sphere around the cut-vertex $v$ and its resulting nesting-tree. The ranking will compute the numbers 10-9-8.}
    \label{fig:ranking-10-9-8}
\end{figure}

\clearpage

\section{Examples of ranking and unranking connected components}

\begin{figure}[h!]
    \captionsetup[subfigure]{justification=centering}
    \centering
    \begin{subfigure}{0.15\textwidth}
    \centering
    \includegraphics[page=1, width=\textwidth]{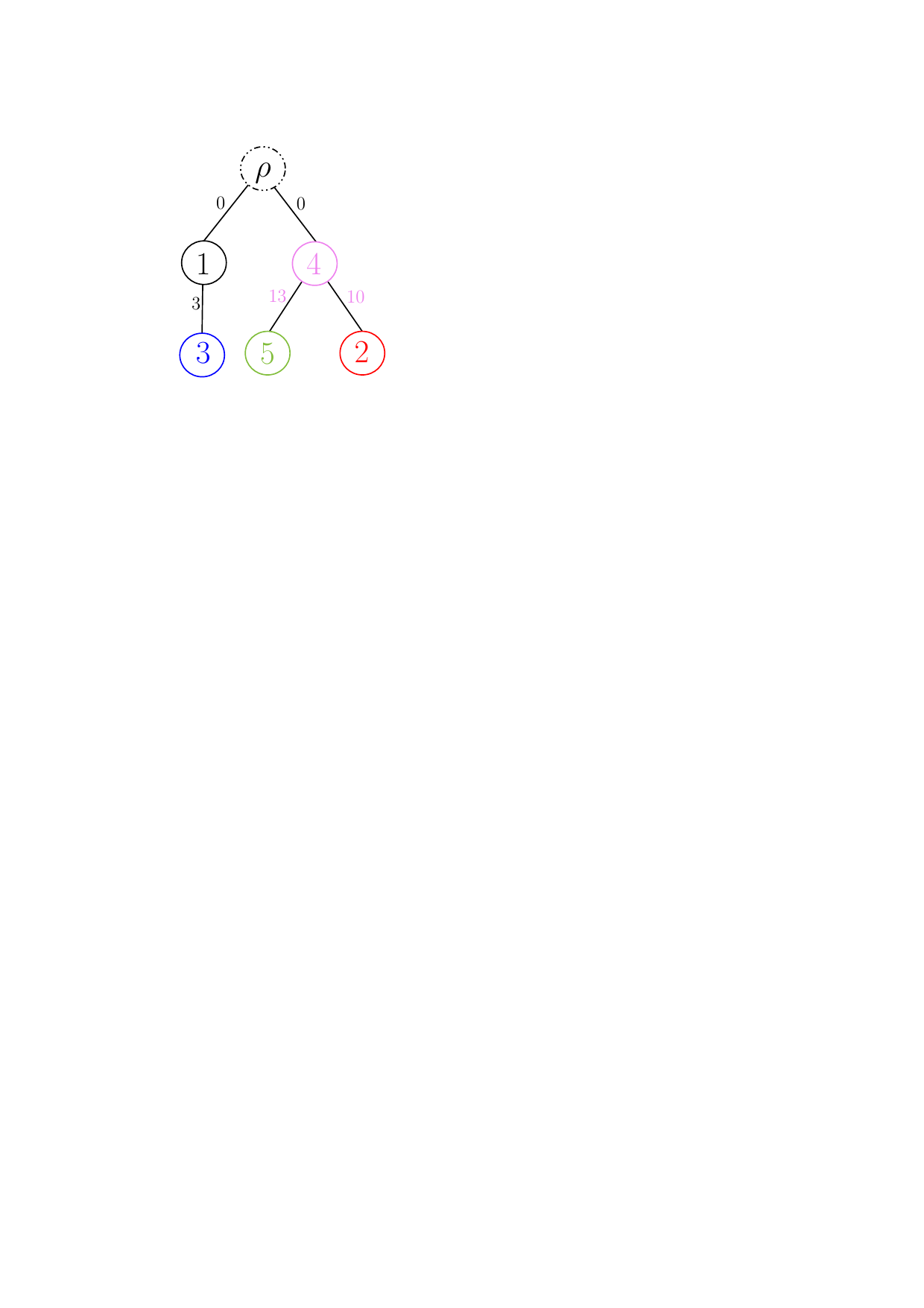}
    \subcaption{$\varnothing$}
    \end{subfigure}
    \hfil
    \begin{subfigure}{0.15\textwidth}
    \centering
    \includegraphics[page=2, width=\textwidth]{figures/pruferone.pdf}
    \subcaption{$10$}
    \end{subfigure}
    \hfil
    \begin{subfigure}{0.15\textwidth}
    \centering
    \includegraphics[page=3, width=\textwidth]{figures/pruferone.pdf}
    \subcaption{$10, 3$}
    \end{subfigure}
    \hfil
    \vspace{1pc}
    \begin{subfigure}{0.15\textwidth}
    \centering
    \includegraphics[page=4, width=\textwidth]{figures/pruferone.pdf}
    \subcaption{$10, 3, 0$}
    \end{subfigure}
    \hfil
    \begin{subfigure}{0.15\textwidth}
    \centering
    \includegraphics[page=5, width=\textwidth]{figures/pruferone.pdf}
    \subcaption{$10, 3, 0, 13$}
    \end{subfigure}
    \hfil
    \begin{subfigure}{0.15\textwidth}
    \centering
    \includegraphics[page=6, width=\textwidth]{figures/pruferone.pdf}
    \subcaption{$10, 3, 0, 13$}
    \end{subfigure}
    \caption{The ranking of a nesting tree using the variation of the Pr{\"u}fer sequence. The algorithm produces the tuple $\tau=\langle 10, 3, 0, 13 \rangle$}\label{fig:nest-tree-to-pruferone}
\end{figure}

\begin{figure}[h!]
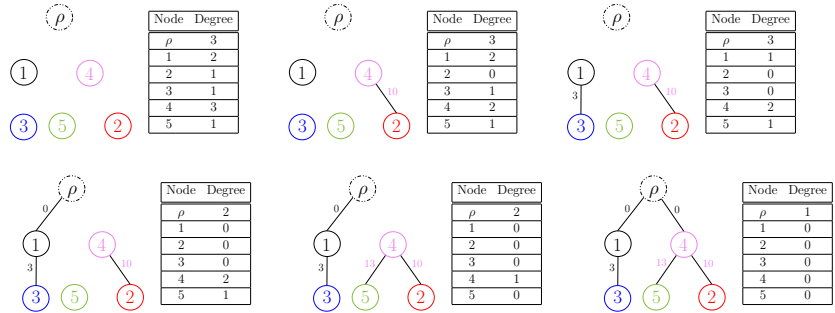

    \captionsetup[subfigure]{justification=centering}
    \centering
    \begin{subfigure}{0.25\textwidth}
    \centering
    \includegraphics[page=7, width=\textwidth]{figures/pruferone.pdf}
    \end{subfigure}
    \hfil
    \begin{subfigure}{0.25\textwidth}
    \centering
    \includegraphics[page=8, width=\textwidth]{figures/pruferone.pdf}
    \end{subfigure}
    \hfil
    \begin{subfigure}{0.25\textwidth}
    \centering
    \includegraphics[page=9, width=\textwidth]{figures/pruferone.pdf}
    \end{subfigure}
    \hfil
    \vspace{1pc}
    \begin{subfigure}{0.25\textwidth}
    \centering
    \includegraphics[page=10, width=\textwidth]{figures/pruferone.pdf}
    \end{subfigure}
    \hfil
    \begin{subfigure}{0.25\textwidth}
    \centering
    \includegraphics[page=11, width=\textwidth]{figures/pruferone.pdf}
    \end{subfigure}
    \hfil
    \begin{subfigure}{0.25\textwidth}
    \centering
    \includegraphics[page=12, width=\textwidth]{figures/pruferone.pdf}
    \end{subfigure}
    \caption{The unranking of a nesting tree using the tuple $\tau=\langle 10, 3, 0, 13 \rangle$. The tables representing the association between a component $G_i$ and $\delta_i$. The tuple $\tau'$ is equal to $\langle 4, 1, 0, 4 \rangle$. The labels of the faces of $G_i$ are as follows: $G_1 = \{1,2,3\}, G_2 = \{4,5\}, G_3 = \{6,7,8,9\}, G_4 = \{10, 11, 12, 13\}, G_5 = \{14, 15\}$.} \label{fig:pruferone-to-nest-tree}
\end{figure}

\begin{figure}[h!]
    \centering
    \includegraphics[page=13, width=0.5\textwidth]{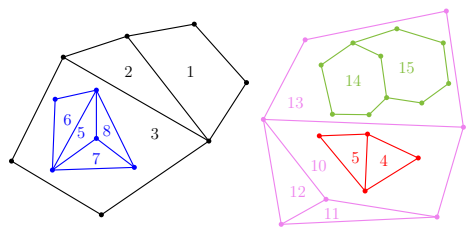}
    \caption{An example of a non-connected graph whose embedding corresponds to the nesting tree in \cref{fig:nest-tree-to-pruferone,fig:pruferone-to-nest-tree}, the components and the labeling of their internal faces are highlighted by the colors}
    \label{fig:non-connected-example}
\end{figure}

%
%
%
%
\end{document}